\documentclass{article}

\usepackage{microtype}

\usepackage{booktabs} 
\usepackage{amsmath}
\usepackage{amssymb}
\usepackage{amsthm}
\usepackage{amsfonts}
\usepackage{xspace}
\usepackage{algorithm}
\usepackage{algorithmicx}
\usepackage{algpseudocode}
\usepackage{color}
\usepackage[T1]{fontenc}
\usepackage{bm}
\usepackage{thmtools}
\usepackage{lmodern}

\newcommand{\net}{\ensuremath{\mathfrak{N}}\xspace}
\newcommand{\nat}{\ensuremath{\mathbb{N}}\xspace}
\newcommand{\Prob}[1]{\mathbb{P}\left[#1\right]}
\newcommand{\ci}{\ensuremath{c}\xspace}
\newcommand{\cj}{\ensuremath{d}\xspace}
\newcommand{\ecc}{\ensuremath{D}\xspace}

\bibliographystyle{plain}

\title{Deterministic Blind Radio Networks }

\author{Artur Czumaj \and Peter Davies}


\newtheorem{theorem}{Theorem}

\newtheorem{lemma}[theorem]{Lemma}

\newtheorem{definition}[theorem]{Definition}

\begin{document}

\maketitle

\begin{abstract}
Ad-hoc radio networks and multiple access channels are classical and well-studied models of distributed systems, with a large body of literature on deterministic algorithms for fundamental communications primitives such as broadcasting and wake-up. However, almost all of these algorithms assume knowledge of the number of participating nodes and the range of possible IDs, and often make the further assumption that the latter is linear in the former. These are very strong assumptions for models which were designed to capture networks of weak devices organized in an ad-hoc manner. It was believed that without this knowledge, deterministic algorithms must necessarily be much less efficient.

In this paper we address this fundamental question and show that this is not the case. We present \emph{deterministic} algorithms for \emph{blind} networks (in which nodes know only their own IDs), which match or nearly match the running times of the fastest algorithms which assume network knowledge (and even surpass the previous fastest algorithms which assume parameter knowledge but not small labels).

Specifically, in multiple access channels with $k$ participating nodes and IDs up to $L$, we give a wake-up algorithm requiring $O(\frac{k\log L \log k }{\log\log k})$ time, improving dramatically over the $O(L^3 \log^3 L)$ time algorithm of De Marco et al. (2007), and a broadcasting algorithm requiring \sloppy{$O(k\log L \log\log k)$ }time, improving over the $O(L)$ time algorithm of G\k{a}sieniec et al. (2001) in most circumstances. Furthermore, we show how these same algorithms apply directly to multi-hop radio networks, achieving even larger running time improvements.
 \end{abstract}

\section{Introduction}

In this paper we address the fundamental question in distributed computing of whether basic communication primitives can be efficiently performed in networks in which the participating nodes have no knowledge about the network structure. Our focus is on \emph{deterministic} algorithms.

\subsection{Models and problems}

We consider the two classical, and related, models of distributed communication: \emph{multiple access channels} (cf. \cite{-JS05,-W86}) and \emph{ad-hoc multi-hop radio networks} (cf. \cite{-BGI92,-CGK07,-CR06,-Pel07}).

\paragraph{Multiple access channels.}

A set of $k$ nodes, with unique identifiers (IDs) from $\{1, \dots, L\}$, share a communication channel. Time is divided into discrete steps, and in every step each node chooses to either transmit a message to the channel or listen for messages. A transmission is only successful if exactly one node chooses to transmit in a given time-step; otherwise all nodes hear silence.

\paragraph{Ad-hoc multi-hop radio networks.}

The network is modeled by a \emph{directed} graph $\net = (V,E)$, with $|V| = n$, where nodes correspond to transmitter-receiver stations. The nodes have unique identifiers from $\{1, \dots, L\}$. A directed edge $(v,u) \in E$ means that node $v$ can send a message directly to node $u$. To make propagation of information feasible, we assume that every node in $V$ is reachable in \net from any other. Time is divided into discrete steps, and in every step each node chooses to either transmit a message to all neighbors or listen for messages. A listening node only hears a transmission if exactly one neighbor transmitted; otherwise it hears silence.

It can be seen that multiple access channels are equivalent to \emph{single-hop radio networks} (that is, radio networks in which the underlying graph is a clique).

\paragraph{Node knowledge.}

We study \emph{blind} versions of these models, by which we mean that the minimum possible assumptions about node knowledge are made (and this is where our work differs most significantly from previous work): we assume nodes do not know the parameters $k$, $L$, or $n$, or any upper bounds thereof. In accordance with the standard model of ad-hoc radio networks (for more elaborate discussion about the model, see, e.g., \cite{-ABLP91,-BGI92,-CGGPR00,-CGR00,-CMS03,-GHK13,-KP03,-KM98,-Pel07}), we also make the assumption that a node does not have any  prior knowledge about the topology of the network, its in-degree and out-degree, or the set of its neighbors. In our setting, \emph{the only prior knowledge nodes have is their own unique ID}.

\paragraph{Tasks.}

In both models we consider the fundamental communication tasks of \emph{broadcasting} (see, e.g., the survey \cite{-Pel07} and the references therein) and \emph{wake-up} (cf. \cite{-CGKR05,-CGK07,-GPP01}).

In the task of \emph{wake-up}, nodes begin in a dormant state, and some non-empty subset of nodes spontaneously `wake up' at arbitrary (adversarially chosen) time-steps. Nodes are also woken up if they receive messages. Nodes cannot participate (by transmitting) until they are woken up, and our goal is to ensure that eventually all nodes are awake. We assume nodes have access only to a \emph{local clock}: they can count the number of time-steps since they woke up, but there is no global awareness of an absolute time-step number.

The task of \emph{broadcasting} is usually described as follows: one node begins with a message, and it must inform all other nodes of this message via transmissions. However, to enable our results to transfer from multiple access channels (single-hop radio networks) to multi-hop radio networks, we will instead use broadcasting to refer to a more generalized task. Our broadcasting task will be defined similarly to wake-up, with the only difference being that nodes have access to a \emph{global clock}, informing them of the absolute time-step number. (In multiple access channels, this task is usually also referred to as wake-up, specifying global clock access, but here we will call it broadcasting to better differentiate.)

Notice that the standard broadcasting task in radio networks is a special case of this task, in which only one node spontaneously wakes up. A global clock can be simulated by appending the current global time-step to each transmitted message (and since all message chains originate from the same source node, these time-steps will agree).

For both tasks, we wish to minimize the number of time-steps that elapse between the first node waking up, and all nodes being woken. We are not concerned with the computation performed by nodes within time-steps.

\subsection{Related work}

As fundamental communications primitives, the tasks of designing efficient deterministic algorithms for \emph{broadcasting} and \emph{wake-up} have been extensively studied for various network models for many decades.

\paragraph{Wake-up.}

The wake-up problem (with only local clocks) has been studied in both multiple access channels and multi-hop radio networks (often separately, though the results usually transfer directly from one to the other). It has been commonly assumed in the literature that network parameters are known, and that IDs are small ($L = n^{O(1)}$).

The first sub-quadratic deterministic wake-up protocol for radio networks was given in by Chrobak et al.\ \cite{-CGK07}, who introduced the concept of \emph{radio synchronizers} to abstract the essence of the problem. They give an $O(n^{5/3} \log n)$-time protocol for the wake-up problem. Since then, there have been several improvements in running time, making use of the radio synchronizer machinery: firstly to $O(n^{3/2} \log n)$ \cite{-CK04}, and then to $O(n \log^2 n)$ \cite{-CGKR05}. The current fastest protocol is $O(\frac{n \log ^2 n }{\log\log n})$ \cite{-CD16}. However, without the assumption of small labels, all of these running times are increased. The algorithm of \cite{-CD16} as analyzed would give $O(\frac{n \log L \log (n\log L) }{\log\log (n \log L)})$ time, and similar time with $k$ replacing $n$ in multiple access channels. All of these algorithms, like those we present here, are non-explicit.

There has been some work on wake-up in multiple access channels without knowledge of network parameters: firstly an $O(L^4 \log^5 L)$ algorithm \cite{-GPP01}, and then an improvement to $O(L^3 \log^3 L)$ \cite{-DMPS07}. It was believed that this algorithms in this setting were necessarily much slower than those for when parameters were known; for example, \cite{-DMPS07} states ``a crucial assumption is whether the processors using the shared channel are aware of the total number $n$ of processors sharing the channel, or some polynomially related upper bound to such number. When such number $n$ is known, much faster algorithms are possible.''

There are no direct results for wake-up in radio networks with unknown parameters, but the algorithm of \cite{-DMPS07} can be applied to give $O(n L^3 \log^3 L)$ time.

We note that randomized algorithms for wake-up have also been studied, both with and without parameter knowledge; see \cite{-GPP01,-JS05}.

\paragraph{Broadcasting.}

Broadcasting is possibly the most studied problem in radio networks, and has a wealth of literature in various settings. For the model studied in this paper, \emph{directed} radio networks with \emph{unknown structure} and \emph{without collision detection}, the first sub-quadratic \emph{deterministic} broadcasting algorithm was proposed by Chlebus et al.\ \cite{-CGGPR00}, who gave an $O(n^{11/6})$-time broadcasting algorithm. 
After several small improvements (cf. \cite{-CGOR00,-DMP01}), Chrobak et al.\ \cite{-CGR00} designed an almost optimal algorithm that ns the task in $O(n \log^2n)$ time, the first to be only a poly-logarithmic factor away from linear dependency. Kowalski and Pelc \cite{-KP03} improved this bound to obtain an algorithm of complexity $O(n \log n \log\ecc)$ and Czumaj and Rytter \cite{-CR06} gave a broadcasting algorithm running in time $O(n \log^2\ecc)$. Here $\ecc$ is the eccentricity of the network, i.e., the distance between the furthest pair of nodes. De Marco \cite{-DM10} designed an algorithm that completes broadcasting in $O(n \log n \log\log n)$ time steps, beating \cite{-CR06} for general graphs. Finally, the $O(n \log D \log\log D)$ algorithm of \cite{-CD16} came within a log-logarithmic factor of the $\Omega(n \log D)$ lower bound \cite{-CMS03}. Again, however, these results generally assume \emph{small node labels} ($L=O(n)$, though some of the earlier results only require $L=O(n^c)$ for some constant $c$), and their running time results do not hold otherwise. The situation where node labels can be large is less well-studied, though it is easy to see that the algorithm of \cite{-CGR00} still works, requiring $O(n\log^2 L)$ time. In multiple access channels, a $O(k\log\frac Lk)$ time algorithm exists \cite{-CMS03}. Again, all of these algorithms are, like those presented here, non-explicit.

All of these results also \emph{intrinsically require parameter knowledge}. Without knowledge of $n$, $L$, $k$, or $D$, the fastest algorithm known is the $O(L)$ time algorithm of \cite{-GPP01} for multiple access channels. This algorithm is explicit, but has the strong added restriction that the first node wakes up at global time-step $0$. It also does not transfer to multi-hop radio networks, so the best running time therein is the $O(D L^3 \log^3 L)$ given by the algorithm of \cite{-DMPS07}. Concurrently with this work, randomized algorithms for broadcasting without parameter knowledge are presented in \cite{-CD18b}, achieving a nearly optimal running time of $O(D\log\frac nD\log^2\log \frac nD + \log^2 n)$ in the model we study here (that without collision detection).

Broadcasting, as a fundamental communication primitive, has been also studied in various related models, including undirected networks, randomized broadcasting protocols, models with collision detection, and models in which the entire network structure is known. For example, if the underlying network is undirected, then an $O(n \log \ecc)$-time algorithm due to Kowalski \cite{-K05} exists. If spontaneous transmissions are allowed and a global clock available, then deterministic broadcast can be performed in $O(L)$ time in undirected networks \cite{-CGGPR00}. Randomized broadcasting has been also extensively studied, and in a seminal paper, Bar-Yehuda et al.\ \cite{-BGI92} designed an almost optimal broadcasting algorithm achieving the running time of $O((\ecc + \log n) \cdot \log n)$. This bound has been later improved by Czumaj and Rytter \cite{-CR06}, and independently Kowalski and Pelc \cite{-KP03b}, who gave optimal randomized broadcasting algorithms that complete the task in $O(\ecc \log \frac{n}{\ecc} + \log^2 n)$ time with high probability, matching a known lower bound from \cite{-KM98}.
Haeupler and Wajc \cite{-HW16} improved this bound for undirected networks in the model that allows spontaneous transmissions and designed an algorithm that completes broadcasting in time $O\left(\frac{\ecc \log n \log \log n}{\log\ecc} + \log^{O(1)}n\right)$ with high probability, improved to $O\left(\frac{\ecc \log n}{\log\ecc} + \log^{O(1)}n\right)$ in \cite{-CD17}. In the model with collision detection for undirected networks, an $O(D+\log^6 n)$-time randomized algorithm due to Ghaffari et al. \cite{-GHK13} is the first to exploit collisions and surpass the algorithms (and lower bound) for broadcasting without collision detection.

For more details about broadcasting algorithms in various models, see e.g., \cite{-CD17,-CR06,-GHK13,-K05,-Pel07} and the references therein.


\subsection{New results}

We present algorithms for the fundamental tasks of broadcasting and wake-up in multiple access channels (single-hop radio networks) and multi-hop radio networks which require no knowledge of network parameters: nodes need know only their own unique ID.

Our wake-up algorithm takes $O(\frac{k \log L \log k}{\log\log k})$ time in multiple access channels and $O(\frac{n \log L \log n}{\log\log n})$ time in multi-hop radio networks, improving dramatically over the previous best $O(L^3 \log^3 L)$ and $O(DL^3 \log^3 L)$ respective running times of \cite{-DMPS07} (recall that $k \le n \le L$). This is particularly significant in the case of large labels, since dependency on $L$ has been improved from cubic to logarithmic. Furthermore, our running time matches the $O(\frac{n \log L \log n}{\log\log n})$ time of \cite{-CD16}, the fastest algorithm with parameter knowledge and small node labels.

Our broadcasting algorithm takes $O(k \log L \log\log k)$ time in multiple access channels and $O(n \log L \log\log n)$ time in multi-hop radio networks. This improves over the previous best $O(L)$ multiple access channel bound \cite{-GPP01} in most cases. In radio networks the improvement is even more pronounced, beating not only the $O(DL^3 \log^3 L)$ result of \cite{-DMPS07} but also the $O(n\log^2 L)$-time algorithm of \cite{-CGR00}, which was the fastest result for large labels even when network parameters are known. When labels are small (i.e., $L = n^{O(1)}$), our result matches the best running time for known parameters ($O(n \log D \log\log D)$ from \cite{-CD16}) for networks of polynomial eccentricity.


We believe the primary value of our work is in challenging the \emph{long-standing assumption that parameter knowledge is necessary for efficient deterministic algorithms} in radio networks and multiple access channels. We show that in fact, deterministic algorithms which do not assume this knowledge can reach the fastest running times for those that do.

\subsection{Previous approaches}

Almost all deterministic broadcasting protocols with sub-quadratic complexity (that is, since \cite{-CGGPR00}) have relied on the concept of \emph{selective families} (or some similar variant thereof, such as selectors). These are families of sets for which one can guarantee that any subset of $[k] := \{1,2,\dots,k\}$ below a certain size has an intersection of size exactly $1$ with some member of the family \cite{-CGGPR00}. They are useful in the context of radio networks because if the members of the family are interpreted to be the set of nodes which are allowed to transmit in a particular time-step, then after going through each member, any node with a participating in-neighbor and an in-neighborhood smaller than the size threshold will be informed. Most of the recent improvements in broadcasting time have been due to a combination of proving smaller selective families exist, and finding more efficient ways to apply them (i.e., choosing which size of family to apply at which time) \cite{-CGGPR00,-CGOR00,-CGR00,-CR06}.

Applying such constructs requires coordination between nodes, which relies on a global clock, making them unsuitable for wake-up. To tackle this issue, Chrobak et al.\ \cite{-CGK07} introduced the concept of \emph{radio synchronizers}. These are a development of selective families which allow nodes to begin their behavior at different times. A further extension to \emph{universal synchronizers} in \cite{-CK04} allowed effectiveness across all in-neighborhood sizes.

Another similar extension of selective families came in 2010 with a paper by De Marco \cite{-DM10}, which used a \emph{transmission matrix} to schedule node transmissions for broadcasting. Like radio synchronizers, the application of this matrix allowed nodes to begin their own transmission sequence at any time, and still provided a `selective' property that guaranteed broadcasting progress. The synchronization afforded by the assumption of a global clock allowed this method to beat the time bounds given by radio synchronizers (and previous broadcasting algorithms using selective families).

The proofs of \emph{existence} for selective families, synchronizers, and transmission matrices follow similar lines: a probabilistic candidate object is generated by deciding on each element independently at random with certain carefully chosen probabilities, and then it is proven that the candidate satisfies the desired properties with positive probability, and so \emph{such an object must exist}. These types of proofs are all \emph{non-constructive} (and therefore all resulting algorithms non-explicit; cf. \cite{-CK05,-I02} for an explicit construction of selective families with significantly weaker size bounds).

In contrast, results on multiple access channels without parameter knowledge (notably \cite{-GPP01,-DMPS07}) have not used these types of combinatorial objects, and instead rely on some results from number theory. The algorithm of \cite{-DMPS07}, for instance, is to have nodes transmit periodically, a node with ID $v$ waiting $p_v$ steps between transmissions, where $p_v$ is the $v^{th}$ smallest prime number. A number-theoretic result is then employed to show that eventually one node will transmit alone. As a result, these algorithms have the advantage of being explicit, but the disadvantage of slower running times.

\subsection{Novel approach}

We aim to apply the approach of using combinatorial objects proven by the probabilistic method to the setting where network parameters are not known. One way to do this would be to apply selectors (for example) of continually increasing size, until one succeeds. However, since there are two parameters which must meet the correct values for a successful application ($k$ and $L$ in the case of medium access channels), running times for this approach are poor. Instead, we define, and prove the existence of, \emph{a single, infinite combinatorial object, which can accommodate all possible values of parameters at the same time}.

Another difference is that for all previous works using selective families or variants thereof, the candidate object has been generated with the same sequence of probabilities for each node. Here, however, in order to achieve good running times we need to have these probabilities depending on the node ID. In essence, this means that nodes effectively use their own ID as an estimate of the maximum ID in the network.

\subsection{A note on non-explicitness}
As mentioned, almost all deterministic broadcasting protocols with sub-quadratic complexity have relied on selective families or variants thereof, and have been non-explicit results. Our work here is also non-explicit, but while this is standard for deterministic radio network algorithms, it may present more of an issue here, since our combinatorial structures are infinite. It is not clear how the protocols we present could be performed by devices with bounded memory, and as such this work is more of a proof-of-concept than a practical algorithm. However, it is possible that our method could be adapted so that nodes' behavior could be generated by a finite-size (i.e. a function of ID) program; this would be a natural and interesting extension to our work, and would overcome the problem.

Another issue which would remain is that nodes must perform the protocol indefinitely, and never become aware that broadcasting has been successfully completed. However, this is unavoidable in the model: Chlebus et al. \cite{-CGGPR00} prove that \emph{acknowledged} broadcasting without parameter knowledge is impossible.

\section{Combinatorial objects}

In this section we present the two combinatorial objects that we wish to use in our algorithms: \emph{unbounded universal synchronizers} and \emph{unbounded transmission schedules}. After defining them in Sections \ref{subsec-UUS} and \ref{subsec-UTS}, we present their main properties in Theorems \ref{thm:UURS} and \ref{thm:UTS}, and then show how to apply them to obtain new deterministic algorithms for wake-up and broadcasting in multiple access channels and in radio networks (Theorems \ref{thm:MACWU}, \ref{thm:MACBC}, \ref{thm:RNWU}, \ref{thm:RNB}).

\subsection{Unbounded universal synchronizers}
\label{subsec-UUS}

For the task of wake-up, i.e., in the absence of a global clock, we will define an object called an \textbf{unbounded universal synchronizer} for use in our algorithm.

We begin by defining the sets of circumstances our algorithm must account for:

\begin{definition}
	An \textbf{$(r,k)$-instance} $X$ is a set $K$ of $k$ nodes with \[\sum_{v\in K} \log v  = r\] and wake-up function $\omega:K\rightarrow \nat$.
\end{definition}

(By using $v$ as an integer here, we are abusing notation to mean the ID of node $v$.)

Here $r$ is the main parameter we will use to quantify how `large' our input instance is. By using the sum of logarithms of IDs (which approximates the total number of bits needed to write all IDs in use), we capture both the number of participating nodes and the length of IDs in a single parameter. We require $r$ to be an integer, so we round down accordingly, but we omit floor functions for clarity since they do not affect the asymptotic result.

The \emph{wake-up function} $\omega$ maps each node to the time-step it wakes up (either spontaneously or by receiving a transmission) when our algorithm is run on this instance. This means that the wake-up function depends on the algorithm, but there is no circular dependency: whether nodes wake-up in time-step $j$ only depends on the algorithm's behavior in previous time-steps, and the algorithm's behavior at time-step $j$ only depends on the wake-up function up to $j$. We will also extend $\omega$ to sets of nodes in the instance by $\omega(K):=\min_{v\in K} \omega(v)$.

We now define the combinatorial object that will be the basis of our algorithm:

\begin{definition}
	\label{def:UURS}
	For a function $g:\nat \times \nat \rightarrow \nat$, an \textbf{unbounded universal synchronizer of delay $g$} is a function $\mathcal S : \nat \rightarrow \{0,1\}^\nat$ such that for any $(r,k)$-instance, there is some time-step $j \leq \omega(K) + g(r,k)$ with $\sum_{v\in K} \mathcal S(v)_{j-\omega(v)} = 1$.
\end{definition}

The \emph{unbounded universal synchronizer} $S$ is a function mapping node IDs to a sequence of $0$s and $1$s, which tell nodes when to listen and transmit respectively. The function $g$, which we will call the \emph{delay function}, tells us how many time-steps we must wait before a successful transmission is guaranteed, so this is what we want to asymptotically minimize.

We will apply this object to perform wake-up as follows: each node $v$ transmits a message in time-step $j$ (with its time-step count starting upon waking up) iff $S(v)_{j} = 1$. Then, the property guarantees that at some time-step $j$ within $g(r,k)$ time-steps of the first node waking up, any $(r,k)$-instance will have a successful transmission. We call this $S$ \emph{`hitting' the $(r,k)$-instance at time-step $j$}. So, our aim is to show the existence of such an object, with $g$ growing as slowly as possible.

Our main technical result in this section is the following:

\begin{restatable}{theorem}{UURS}
	\label{thm:UURS}
	There exists an \textbf{unbounded universal synchronizer of delay $g$}, where $g(r,k) =  O\left(\frac{r \log k}{\log\log k}\right)$.
\end{restatable}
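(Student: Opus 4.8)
The plan is to prove the existence of $\mathcal S$ by the probabilistic method. I would define a random candidate schedule, show that for an arbitrary \emph{fixed} $(r,k)$-instance the probability of \emph{not} being hit within the first $g(r,k)$ steps after $\omega(K)$ is exponentially small in $r$, and then take a union bound over a suitably discretized family of instances. The whole argument hinges on the $r$-parametrization: because the target delay is $g(r,k)=\Theta\!\left(\frac{r\log k}{\log\log k}\right)$, it suffices to drive the per-instance failure probability down to $2^{-\Omega(r)}$, and I would arrange the counting so that the number of ``relevant'' instances is only $2^{O(r)}$, making the union bound converge.

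For the random object I would set each bit $\mathcal S(v)_t$ to $1$ independently with a probability $p(v,t)$ depending on \emph{both} the local time $t$ and the node ID $v$; this ID-dependence is what the authors flag as essential, the idea being that node $v$ uses $\log v$ as a proxy for the scale of the instance. The schedule $p(v,\cdot)$ is a single fixed function (nodes do not know $k$), but in the analysis, for a given instance I would regroup each node's local-time axis into consecutive \emph{windows} of length $\Theta(\log k/\log\log k)$, within which $p(v,\cdot)$ sweeps a ladder of transmission probabilities covering the range needed to single out one of $k$ nodes. The goal of each window is that, once enough nodes of $K$ are awake, it contains a time-step at which exactly one node transmits with probability bounded below by a constant; equivalently, it suffices that the per-step unique-transmission probability averages $\Omega(\log\log k/\log k)$ across the window. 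Granting this, a window succeeds with constant probability, so $\Omega(r)$ disjoint windows fit inside $g(r,k)$ steps and the failure probability is $(1-\Omega(1))^{\Omega(r)}=2^{-\Omega(r)}$, provided the windows can be made sufficiently independent for a Chernoff/Borel–Cantelli-type bound.

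For the union bound I would discretize as follows. Since only time-steps in $[\omega(K),\omega(K)+g]$ matter, I may shift so the earliest wake-up is $0$ and truncate every offset to $\{0,1,\dots,g\}$; this leaves at most $(g+1)^k=2^{k\log(g+1)}$ wake-up functions, and using $r\ge\log(k!)=\Omega(k\log k)$ together with $k\le L$ one checks $k\log(g+1)=O(r)$. Counting the ID-sets with $\sum_{v\in K}\lfloor\log v\rfloor=r$ by grouping integers according to their bit-length $\lfloor\log v\rfloor$ gives at most $2^{O(r+k)}=2^{O(r)}$ choices. Hence the total number of relevant $(r,k)$-instances is $2^{O(r)}$, and choosing the constant in $g$ large enough that the per-instance failure probability is below $2^{-\alpha r}$ with $\alpha$ exceeding this counting constant makes $\sum_{r,k}(\#\text{instances})\cdot\Prob{\text{fail}}<1$ (a geometric series in $r$, with the polynomially many values $k\le r$ absorbed). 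Therefore some deterministic $\mathcal S$ hits every instance within the stated delay.

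The main obstacle is the per-window claim, which is where the $\log\log k$ improvement actually lives. Covering the probability range $[1/k,1]$ with \emph{dyadically}-spaced levels needs $\log k$ steps per window and only yields the classical $r\log k$ delay; squeezing the window to $\Theta(\log k/\log\log k)$ forces the ladder to be spaced super-constantly, so no single level is guaranteed within a constant factor of the ideal $1/N$ for the active count $N$. The refinement I would pursue is to analyze the induced distribution of the active nodes across these coarse levels and extract a unique transmission via a second-moment / Poissonization argument, exploiting the ID-dependence so that different nodes populate different levels and the relevant level for an instance of total size $r$ is located after only $O(\log k/\log\log k)$ steps; this mirrors the analysis yielding the $\frac{\log k}{\log\log k}$ factor in prior synchronizer constructions. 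The second difficulty, intertwined with this, is \emph{universality}: the bound must hold for arbitrary (adversarial) wake-up offsets, so the windows of different nodes are misaligned, and I must ensure a constant-probability unique transmission survives this misalignment and that enough such windows are genuinely independent to support the $2^{-\Omega(r)}$ concentration.
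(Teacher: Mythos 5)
Your high-level framework coincides with the paper's: generate a random schedule with ID-dependent transmission probabilities, show that each fixed $(r,k)$-instance fails to be hit within $g(r,k)$ steps with probability $2^{-\Omega(r)}$, and union-bound over $2^{O(r)}$ instances; your counting step (shifting the earliest wake-up to $0$, truncating offsets to $\{0,\dots,g\}$, and using distinctness of IDs to get $r\geq\log(k!)=\Omega(k\log k)$, hence $(g+1)^k=2^{O(r)}$) is essentially the paper's Lemma \ref{lem:risnt}. The gap is the per-instance failure bound itself, which is the entire content of the theorem, and you concede it: your ``per-window claim'' --- that a window of $\Theta(\log k/\log\log k)$ steps yields a unique transmission with constant probability --- is left unproven, with only a vague appeal to a ``second-moment / Poissonization'' analysis of how the active nodes distribute over a ladder of probability levels. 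Notably, you state the correct sufficient condition in passing (the per-step unique-transmission probability need only \emph{average} $\Omega(\log\log k/\log k)$ over the window) but then abandon it and chase the much harder goal of exhibiting a single step whose probability level is within a constant factor of the ideal $1/N$; that harder goal is exactly what cannot be guaranteed with super-constantly spaced levels, which is why your attempt stalls.

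The paper proves the averaged form directly and never needs any step to be near-ideal. Node $v$ transmits at local time $j$ with probability $\frac{c\log v}{j+2c\log v}$, and the analysis tracks the \emph{load} $f(j)$ (the expected number of transmitters at step $j$) of a fixed instance. Two facts are established: (i) a curtailing convention (an instance still unhit at step $j$ satisfies $j\leq g(r_j,k_j)$ for its already-awake sub-instance) forces $f(j)\geq\frac{\log\log k}{2c\log k}$ at \emph{every} step up to $g(r,k)$; and (ii) the total load over all $g(r,k)$ steps is $O(r\log k)$ by a harmonic-sum estimate, so by a Markov-type count all but a small constant fraction of steps also satisfy $f(j)\leq\frac{\log\log k}{3}$. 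By the elementary bound of Lemma \ref{lem:colhit}, a step of load $f$ produces a unique transmission with probability at least $f4^{-f}$, and over the whole range $\left[\frac{\log\log k}{2c\log k},\frac{\log\log k}{3}\right]$ this is at least $\frac{\log\log k}{3c\log k}$: overshooting constant load by a $\log\log k$ factor costs only $4^{-\log\log k/3}=(\log k)^{-2/3}$, which is affordable. Hence $\Omega\bigl(\frac{r\log k}{\log\log k}\bigr)$ steps each hit with probability $\Omega\bigl(\frac{\log\log k}{\log k}\bigr)$, and since distinct steps read disjoint sets of random bits $S(v)_{j-\omega(v)}$, these events are independent, giving failure probability at most $e^{-\Omega(r)}$ with no extra machinery. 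This also disposes of your two residual worries: misaligned wake-up times are absorbed into the load bounds rather than fought window by window, and independence across steps is automatic, so no Chernoff or Borel--Cantelli argument is needed.
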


Our approach to proving Theorem \ref{thm:UURS} will be to randomly generate a candidate synchronizer, and then prove that with positive probability it does indeed satisfy the required property. Then, for this to be the case, at least one such object must exist.

Our candidate $S : \nat \rightarrow \{0,1\}^\nat$ will be generated by independently choosing each $S(v)_j$ to be \textbf 1 with probability $\frac{\ci\log v}{j+2\ci\log v}$ and \textbf 0 otherwise, where $\ci$ is a constant to be chosen later.

While $S$ is drawn from an uncountable set, we will only be concerned with events that depend upon a finite portion of it, and countable unions and intersections thereof. Therefore, we can use as our underlying $\sigma$-algebra that generated by the set of all events $E_{v,j} = \{S:S(v)_j = 1\}$, where $v,j \in \nat$, with the corresponding probabilities $\Prob{E_{v,j,1}} = \frac{\ci\log v}{j+2\ci\log v}$.

We set delay function $g(r,k) = \frac{\ci^2 r \log k}{\log\log k}$.

To simplify our task, we begin with some useful observations:

First we note that since we only care about the asymptotic behavior of $g$, we can assume that $r$ is larger than a sufficiently large constant.

We also note that we need not consider all $(r,k)$-instances. For a given $(r,k)$-instance and time-step $j$, let $K_j$ be the set of nodes woken up by time $j$ (with $k_j := |K_j|$), and $r_j$ be defined as $r$ but restricted to the nodes in $K_j$. Let $t$ be the earliest time-step such that $t>g(r_t,k_t)$, and curtail the $(r,k)$-instance to the corresponding $(r_t,k_t)$-instance of nodes in $K_t$. It is easy to see that if we hit all curtailed $(r_t,k_t)$-instances within $g(r_t,k_t)$ time, we must hit all $(r,k)$-instances within $g(r,k)$ time, so henceforth we will only consider curtailed instances (i.e., we can assume $j\leq g(r_j,k_j) \forall j< g(r,k)$).

Finally, we observe that, since nodes' behavior is not dependent on the global clock, we can shift all $(r,k)$-instances to begin at time-step $0$.

To analyze the probability of hitting $(r,k)$-instances, define the \emph{load} of a time-step $f(j)$ to be the expected number of transmissions in that time-step:

\begin{definition}
	For a fixed $(r,k)$-instance, the \textbf{load $f(j)$ of a time-step $j$} is defined as \[\sum_{v\in K_j} \Prob{v\text{ transmits}} = \sum_{v\in K_j} \frac{\ci \log v}{j-\omega(v)+2\ci \log v}\enspace.\]
\end{definition}

We now seek to bound the load from above and below, since when the load is close to constant we have a good chance of hitting.

\begin{lemma}\label{lem:loadlb}
	All time-steps $j\leq g(r,k)$ have $f(j) \geq \frac{\log\log k}{2\ci\log k}$.
\end{lemma}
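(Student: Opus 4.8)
The plan is to lower-bound the load term-by-term and then exploit the curtailing reduction to convert the bound $j \le g(r,k)$ into the claimed expression. The first step is to discard the dependence on the individual wake-up times: since every node $v \in K_j$ satisfies $\omega(v) \le j$, we have $j - \omega(v) \le j$, so each summand obeys $\frac{\ci \log v}{j - \omega(v) + 2\ci\log v} \ge \frac{\ci\log v}{j + 2\ci\log v}$, and therefore $f(j) \ge \sum_{v\in K_j}\frac{\ci\log v}{j + 2\ci\log v}$.

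Next I would replace the per-node denominators by a single uniform one. Because $\sum_{v\in K_j}\log v = r_j$ and all logarithms are nonnegative, each individual $\log v$ is at most $r_j$; hence $j + 2\ci\log v \le j + 2\ci r_j$ and, summing, $f(j) \ge \frac{\ci \sum_{v\in K_j}\log v}{j + 2\ci r_j} = \frac{\ci r_j}{j + 2\ci r_j}$. This quantity is decreasing in $j$, so it suffices to substitute the largest admissible value of $j$.

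Here the curtailing observation does the crucial work: for every $j \le g(r,k)$ the reduction to curtailed instances guarantees $j \le g(r_j, k_j) = \frac{\ci^2 r_j \log k_j}{\log\log k_j}$ (the single endpoint $j = g(r,k)$ following from monotonicity of $g(r_j,k_j)$ in $j$, up to a harmless additive constant). Substituting this upper bound for $j$ and cancelling the common factor $\ci r_j$ yields $f(j) \ge \frac{1}{\frac{\ci \log k_j}{\log\log k_j} + 2}$. The point of the reduction is precisely to exclude the dangerous regime in which $j$ is large while only a few low-ID nodes are awake: there $\frac{\ci r_j}{j + 2\ci r_j}$ could be arbitrarily small, but curtailing forbids exactly this.

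Finally I would convert the bound in terms of $k_j$ into one in terms of $k$. For $k_j$ above a fixed constant we have $\frac{\ci\log k_j}{\log\log k_j} \ge 2$, so the denominator is at most $\frac{2\ci\log k_j}{\log\log k_j}$ and $f(j) \ge \frac{\log\log k_j}{2\ci\log k_j}$; since $x \mapsto \frac{\log\log x}{\log x}$ is decreasing for $x$ larger than a constant and $k_j \le k$, this is at least $\frac{\log\log k}{2\ci\log k}$, as required. I expect the main obstacle to be the bookkeeping around these threshold and monotonicity conditions — in particular verifying that $\frac{\log\log x}{\log x}$ is decreasing in the relevant range, absorbing the endpoint and additive-constant slack from the curtailing step into the constant $\ci$, and confirming that the degenerate small-$k_j$ cases (e.g.\ a single awake node, where the expression is meaningless) are covered either by the standing assumption that the relevant parameters exceed a large constant or as trivial instances in which a lone node transmits successfully.
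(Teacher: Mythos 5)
Your proposal is correct and follows essentially the same route as the paper's proof: drop the $\omega(v)$ terms, pass to the uniform bound $\frac{\ci r_j}{j+2\ci r_j}$, invoke the curtailing assumption $j \leq g(r_j,k_j)$, and finish via monotonicity of $\frac{\log\log x}{\log x}$. The only cosmetic difference is that the paper splits into the cases $k_j = k$ and $k_j < k$ (which are identical calculations) while you handle both uniformly.
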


\begin{proof}
	Fix a time-step $j\leq g(r,k)$, let $K_j$ be the set of nodes awake by time-step $j$, and let $k_j = |K_j|$ and $r_j=\sum_{v\in K_j} \log v$, analogous to $r$ and $k$. If $k_j = k$, then
	\[f(j) \geq \sum\limits_{v\in K}\frac{\ci\log v}{j+2\ci\log v} \geq \frac{\ci r}{j+2\ci r} \geq \frac{\ci r}{\frac{2\ci^2 r \log k}{\log\log k}} \geq \frac{\log\log k}{2\ci\log k}\enspace.\]
	If $k_j < k$, then due to our curtailing of instances, we have $j \leq g(r_j)$. So,
	\begin{align*}
	f(j) &\geq \sum\limits_{v\in K_j}\frac{\ci\log v}{j+2\ci \log v} \geq \frac{\ci r_j}{j+2\ci r_j} \geq \frac{\ci r_j}{\frac{2\ci^2 r_j \log k_j}{\log\log k_j}} \geq \frac{\log\log k_j}{2\ci\log k_j} \geq \frac{\log\log k}{2\ci\log k}\enspace.
	\qedhere
	\end{align*}
\end{proof}

Having bounded load from below, we also seek to bound it from above. Unfortunately, the load in any particular time-step can be very high, but we can get a good bound on at least a constant fraction of the columns.
\begin{lemma}
	Let $F$ denote the set of time-steps $j\leq g(r,k)$ such that $\frac{\log\log k}{2\ci\log k} \leq f(j) \leq \frac{\log\log k}{3}$. Then $|F|\geq \frac{\ci r\log k}{2\log\log k}$.
\end{lemma}

\begin{proof}
	The total load over all time-steps can be bounded as follows:
	
	\begin{align*}
	\sum_{j\leq g(r,k)} f(j) &= \sum_{j<g(r,k)} \sum\limits_{v\in K_j}\frac{\ci\log v}{j-\omega(v)+2\ci\log v}
	\, \leq\sum\limits_{v\in K} \sum_{\omega(v)< j<g(r,k)} \frac{\ci\log v}{j-\omega(v)+2\ci \log v}\\
	& \leq \sum\limits_{v\in K} \ci\log v \sum_{ j<g(r,k)} \frac{1}{j+2\ci \log v}
	\, \leq \sum\limits_{v\in K} \ci\log v \ln \frac{2g(r,k)}{4 \ci \log v}
	\enspace.
	\end{align*}
	
	Let $K_i=\{v\in K : \frac {r}{k\cdot 2^{i}} \leq \log v < \frac {r}{k\cdot 2^{i-1}}\}$, for $i\geq 1$, and $K' = \{v\in K : \log v \geq \frac {r}{k}\}$
	
	If $\sum_{v\in K_i} \log v > \frac {r}{2^i}$ then
	$
	r < 2^i \sum_{v\in K_i} \log v \leq 2^i \sum_{v\in K_i} \frac {r}{k\cdot 2^i}   \leq r
	\enspace.
	$
	This gives a contradiction, so we must have $\sum_{v\in K_i} \log v \leq \frac {r}{2^i}$. Then,
	
	\begin{align*}
	\sum_{j\leq g(r,k)} f(j)
	&\leq \sum\limits_{v\in K} \ci\log v \ln \frac{2g(r,k)}{4 \ci \log v}
	\leq\sum\limits_{i\geq 1} \sum\limits_{v\in K_i} \ci\log v \ln \frac{g(r,k)}{2 \ci \log v} +  \sum_{v\in K'} \ci\log v \ln \frac{g(r,k)}{2 \ci \log v}\\
	&\leq\sum\limits_{i\geq 1} \sum\limits_{v\in K_i} \ci\log v \ln \frac{g(r,k)}{2 \ci \frac {r}{k\cdot 2^{i}}} +  \sum_{v\in K'} \ci\log v \ln \frac{g(r,k)}{2 \ci \frac {r}{k}}\\
	&=\sum\limits_{i\geq 1} \sum\limits_{v\in K_i} \ci\log v \ln \frac{\ci k 2^{i-1}\log k}{\log\log k} +  \sum_{v\in K'} \ci\log v \ln \frac{\ci k\log k}{2\log\log k}\\
	&\leq \sum\limits_{i\geq 1}  \ci r 2^{-i} (2 \ln k + (i-1)\ln 2) +  2\ci r \ln k
	\leq 5 \ci r \ln k
	\leq 8\ci r \log k
	\enspace.
	\end{align*}
	
	Therefore, at most $\frac{24\ci r\log k}{\log\log k}$ time-steps have load higher than $\frac{\log\log k}{3}$. Since by Lemma \ref{lem:loadlb} all time-steps have load at least $\frac{\log\log k}{2\ci \log k}$,we have $|F|\geq g(r,k) - \frac{24\ci r\log k}{\log\log k} \geq \frac{\ci^2 r\log k}{2\log\log k}$ (provided we pick $\ci \geq 7$).
\end{proof}

Now that we have bounded load, we show how it relates to hitting probability. The following lemma, or variants thereof, has been used in several previous works such as \cite{-DM10}, but we prove it here for completeness.

\begin{lemma}\label{lem:colhit}
	Let $x_i$, $i\in [n]$, be independent $\{0,1\}$-valued random variables with $\Prob{x_i=1}\leq \frac 12$, and let $f =\sum_{i\in [n]} \Prob{x_i=1}$. Then $\Prob{\sum_{i\in [n]} x_i = 1} \geq f4^{-f}$.
\end{lemma}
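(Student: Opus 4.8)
The goal is to lower-bound the probability that exactly one of the $x_i$ equals $1$, in terms of the load $f = \sum_i \Prob{x_i=1}$. Writing $p_i := \Prob{x_i=1}$, the event $\{\sum_i x_i = 1\}$ decomposes as a disjoint union over which single variable fires, so

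\begin{equation*}
\Prob{\textstyle\sum_i x_i = 1} = \sum_{i\in[n]} p_i \prod_{\ell\neq i} (1-p_\ell)\enspace.
\end{equation*}

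The plan is to bound each product $\prod_{\ell\neq i}(1-p_\ell)$ from below by a common quantity, pull it out, and recognize the remaining sum as $f$.

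First I would handle the product. Since $p_i \le \tfrac12$ for every $i$, I can use an elementary inequality of the form $1-p \ge 4^{-p}$ valid on the interval $[0,\tfrac12]$ (this holds because $4^{-p}$ is convex and agrees with the line $1-p$ at the endpoints $p=0$ and $p=\tfrac12$, so the chord lies above the curve). Applying this termwise gives $\prod_{\ell\neq i}(1-p_\ell) \ge \prod_{\ell\neq i} 4^{-p_\ell} = 4^{-\sum_{\ell\neq i} p_\ell} \ge 4^{-f}$, where the last step drops the omitted term $p_i \ge 0$ from the exponent and uses $\sum_{\ell\neq i} p_\ell \le f$. Crucially this lower bound $4^{-f}$ is uniform in $i$.

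Substituting back, I would get
\begin{equation*}
\Prob{\textstyle\sum_i x_i = 1} = \sum_{i} p_i \prod_{\ell\neq i}(1-p_\ell) \ge 4^{-f} \sum_i p_i = f\,4^{-f}\enspace,
\end{equation*}
which is exactly the claimed bound.

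The only genuinely non-routine point is justifying the scalar inequality $1-p \ge 4^{-p}$ on $[0,\tfrac12]$; I would expect this to be the main (though minor) obstacle, and I would settle it by the convexity/chord argument above (or equivalently by checking $g(p) := (1-p) - 4^{-p}$ satisfies $g(0)=g(\tfrac12)=0$ and is concave, hence nonnegative between the roots). Everything else is bookkeeping: the exact disjoint-union expansion of the ``exactly one'' event, and the observation that dropping the single factor $p_i$ from the exponent only weakens the bound. Note the hypothesis $p_i \le \tfrac12$ is used precisely to license the termwise inequality, so it cannot be discarded.
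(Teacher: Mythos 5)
Your proof is correct and takes essentially the same approach as the paper: both decompose the event $\{\sum_i x_i = 1\}$ as a disjoint union over which variable fires and rely on the same scalar inequality $1-p \ge 4^{-p}$ on $[0,\tfrac12]$. The only cosmetic difference is that the paper first lower-bounds $\prod_{\ell\neq i}(1-p_\ell)$ by the full product $\prod_{\ell}(1-p_\ell)$ and then applies the inequality, whereas you apply it termwise and drop the omitted term from the exponent --- an immaterial distinction.
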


\begin{proof}
	\begin{align*}
	\Prob{\sum_{i\in [n]} x_i = 1}
	&= \sum_{j\in [n]} \Prob{x_j = 1 \land x_i = 0 \forall i \neq j}
	\geq \sum_{j\in [n]} \Prob{x_j = 1}\cdot \Prob{x_i = 0 \forall i}\\
	&\geq f\cdot \Prob{x_i = 0 \forall i}
	= f\cdot \prod_{i\in [n]} (1- \Prob{x_i = 1})
	\geq f\cdot \prod_{i\in [n]} 4^{-\Prob{x_i = 1}} \\
	&= f\cdot 4^{-\sum_{i\in [n]} \Prob{x_i = 1}}
	=f4^{-f}
	\enspace.
	\qedhere
	\end{align*}
\end{proof}

We can use Lemma \ref{lem:colhit} to show that the probability that we hit on our `good' time-steps (those in $F$) is high:

\begin{lemma}\label{lem:goodcolhit}
	For any time-step $j \in F$, the probability that $j$ hits is at least $\frac{\log\log k}{3\ci\log k}$.
\end{lemma}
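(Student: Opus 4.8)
The plan is to combine the load bounds established above with the hitting-probability estimate of Lemma~\ref{lem:colhit}. Fix a time-step $j \in F$. By definition of $F$, we have $\frac{\log\log k}{2\ci\log k} \leq f(j) \leq \frac{\log\log k}{3}$. I want to apply Lemma~\ref{lem:colhit} to the indicator variables $x_v$ for ``node $v$ transmits in time-step $j$'' (over $v\in K_j$), which are independent since each $S(v)_{j-\omega(v)}$ is chosen independently. The quantity $f$ in that lemma is exactly our load $f(j)$, so the lemma yields $\Prob{j\text{ hits}}\geq f(j)\,4^{-f(j)}$.

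\medskip

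First I would verify the hypothesis of Lemma~\ref{lem:colhit}, namely that each transmission probability is at most $\frac12$. Each such probability is $\frac{\ci\log v}{(j-\omega(v))+2\ci\log v}$, and since $j-\omega(v)\geq 0$ the denominator is at least $2\ci\log v$, giving a value at most $\frac12$; so the hypothesis holds. Next I would observe that the function $h(f)=f\,4^{-f}$ is increasing on the interval $(0,\frac{1}{\ln 4})$ and that our upper bound $f(j)\leq\frac{\log\log k}{3}$ keeps $f(j)$ in a range where I can control $4^{-f(j)}$ from below. The idea is to use the upper bound on $f(j)$ to lower-bound $4^{-f(j)}\geq 4^{-(\log\log k)/3}=(\log k)^{-(2\ln 2)/3}$, and then use the lower bound on $f(j)$ to supply the leading factor $f(j)\geq\frac{\log\log k}{2\ci\log k}$.

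\medskip

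Multiplying these two bounds gives $\Prob{j\text{ hits}}\geq \frac{\log\log k}{2\ci\log k}\cdot(\log k)^{-(2\ln 2)/3}$, and since $(2\ln 2)/3<1$ the factor $(\log k)^{-(2\ln 2)/3}$ dominates $\frac{1}{\log k}$ only up to a constant, so after absorbing constants this is at least $\frac{\log\log k}{3\ci\log k}$ for $k$ larger than a suitable constant (which we may assume, as $r$ and hence $k$ can be taken large). I expect the main obstacle to be the bookkeeping around the exponent: one must confirm that the penalty $4^{-f(j)}$ incurred by the upper load bound is genuinely subconstant relative to $\log k$, i.e.\ that the crude upper bound $f(j)\leq\frac{\log\log k}{3}$ was chosen precisely so that $4^{-f(j)}\geq\Omega(1)$ up to polylogarithmic slack, so that the final bound matches $\frac{\log\log k}{3\ci\log k}$. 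Verifying that the chosen threshold $\frac{\log\log k}{3}$ in the definition of $F$ produces exactly this target constant is the delicate step; everything else is a direct substitution.
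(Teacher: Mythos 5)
Your plan (apply Lemma~\ref{lem:colhit} and lower-bound $f(j)4^{-f(j)}$ using the defining inequalities of $F$) is the same as the paper's, and your verification of the hypothesis $\Prob{x_i=1}\le\frac12$ is fine. But the way you bound $f(j)4^{-f(j)}$ has a genuine gap: you decouple the two factors, using the \emph{lower} bound on $f(j)$ for the leading factor and the \emph{upper} bound on $f(j)$ for the exponential factor. That gives
\[
f(j)4^{-f(j)}\;\ge\;\frac{\log\log k}{2\ci\log k}\cdot 4^{-\frac{\log\log k}{3}}\;=\;\frac{\log\log k}{2\ci\log k}\cdot(\log k)^{-2/3}
\]
(with base-$2$ logs; the exact constant in the exponent is immaterial), which is \emph{weaker} than the target $\frac{\log\log k}{3\ci\log k}$ by a factor of $(\log k)^{2/3}$. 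Your final step, ``after absorbing constants this is at least $\frac{\log\log k}{3\ci\log k}$,'' is exactly where it breaks: $(\log k)^{-2/3}$ is not a constant, it tends to $0$, so your claimed inequality is false for all large $k$. Your closing intuition is also off: the threshold $\frac{\log\log k}{3}$ was \emph{not} chosen so that $4^{-f(j)}=\Omega(1)$ --- at the top of the range $4^{-f(j)}$ really is $(\log k)^{-2/3}$. Moreover the loss is fatal downstream: in Lemma~\ref{lem:hitins} the failure probability would become $e^{-\Omega(r/(\log k)^{2/3})}$ instead of $e^{-\Omega(r)}$, which cannot beat the $2^{5r}$ union bound over instances.

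The fix --- and what the paper does --- is to keep the two factors coupled: the map $x\mapsto x4^{-x}$ is unimodal (increasing on $[0,\frac{1}{\ln 4}]$, decreasing afterwards), so its minimum over the interval $\left[\frac{\log\log k}{2\ci\log k},\frac{\log\log k}{3}\right]$ is attained at an endpoint, and one checks both endpoints. At the left endpoint $a=\frac{\log\log k}{2\ci\log k}$ we have $a=o(1)$, hence $4^{-a}=1-o(1)$ and the value is at least $\frac{2}{3}a=\frac{\log\log k}{3\ci\log k}$. At the right endpoint $b=\frac{\log\log k}{3}$ the exponential penalty is indeed $(\log k)^{-2/3}$, but the leading factor is now $\frac{\log\log k}{3}$ rather than $\frac{\log\log k}{2\ci\log k}$, so the value $\frac{\log\log k}{3}(\log k)^{-2/3}$ still exceeds $\frac{\log\log k}{3\ci\log k}$ (since $(\log k)^{2/3}\le \ci\log k$). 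So the minimum over $F$ is the left-endpoint value, which is precisely the claimed bound. The point of the threshold $\frac{\log\log k}{3}$ is not that the penalty stays constant, but that wherever the penalty is large, the leading factor $f(j)$ is large enough to more than compensate.
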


\begin{proof}
	$\frac{\log\log k}{ 2\ci\log k} \leq f(j) \leq \frac{\log\log k}{3}$, and so $f(j)4^{-f(j)}$ is minimized at $f(j) = \frac{\log\log k}{2\ci\log k}$ and is therefore at least $\frac{\log\log k}{2\ci\log k} 4^{-\frac{\log\log k}{2\ci\log k}}\geq \frac{\log\log k}{3\ci\log k}$.
\end{proof}

We now bound the number of possible instances we must hit:

\begin{lemma}\label{lem:risnt}
	For any (sufficiently large) $r$, the number of unique $(r,k)$-instances is at most $2^{5r}$.
\end{lemma}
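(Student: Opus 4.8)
The goal is to bound the number of distinct $(r,k)$-instances. Recall that an $(r,k)$-instance is specified by a set $K$ of $k$ node IDs with $\sum_{v\in K}\log v = r$, together with a wake-up function $\omega: K \to \nat$. The plan is to count each of these components separately and then multiply.

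First I would count the number of valid ID-sets $K$. Since $\sum_{v\in K}\log v = r$ and each $\log v \geq 0$, the IDs cannot be too large, and the number of IDs $k$ cannot exceed $r$ (up to constant factors, since distinct IDs force $\log v$ to grow). The cleanest approach is to encode the multiset of values $\{\log v : v\in K\}$ as a composition or partition of $r$ into nonnegative integer parts, of which there are roughly $2^{O(r)}$ by a standard stars-and-bars / integer-partition bound; then account for the fact that each value $\log v$ determines $v$ only up to a factor of $2$, contributing another bounded multiplicative factor per node. Summing the per-node slack across all nodes again costs only $2^{O(r)}$ because the total ``budget'' is $r$.

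The more delicate component is the wake-up function $\omega$, since a priori $\omega(v)$ can be any natural number and the instance is therefore drawn from an infinite family. Here I would invoke the curtailing reduction established earlier in the excerpt: we need only consider \emph{curtailed} instances, for which every node wakes up by time $g(r_j,k_j)$, and in particular $\omega(v) \leq g(r,k) = O\!\left(\frac{r\log k}{\log\log k}\right)$ for all $v\in K$. Thus each of the $k \leq O(r)$ nodes has at most $g(r,k)+1 = 2^{O(\log r)}$ choices of wake-up time, giving at most $\bigl(g(r,k)+1\bigr)^{k} = 2^{O(k\log\log r)}$ wake-up functions. Since $k\log\log r = O(r)$, this too is $2^{O(r)}$.

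The main obstacle — and the step requiring the most care — is verifying that the constant in the exponent is genuinely at most $5$, rather than merely $O(1)$. This forces tight, explicit bookkeeping on each of the three sources of multiplicity (the partition of $r$, the per-node factor-of-$2$ ambiguity in recovering $v$ from $\log v$, and the wake-up times), and a check that $r$ being larger than a sufficiently large constant absorbs lower-order terms. I would combine the estimates multiplicatively, using $k \leq r$ and $g(r,k) \leq r^2$ (say) to bound the wake-up contribution, and confirm that the dominant term comes from the integer-partition count $2^{O(r)}$ with all other factors contributing subconstant amounts to the exponent, so that the product stays below $2^{5r}$ for sufficiently large $r$.
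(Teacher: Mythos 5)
Your count of the ID-sets is fine and is essentially the paper's argument in a different encoding: the paper writes the IDs as one concatenated bit-string of length at most $r$ (at most $2^{r+1}$ choices) together with a choice of cut points (at most $2^r$), for $2^{2r+1}$ sets, which matches your ``partition of $r$ plus per-node slack'' accounting. The genuine gap is in the wake-up-function count. From $g(r,k)+1 = 2^{O(\log r)}$ you conclude $\bigl(g(r,k)+1\bigr)^k = 2^{O(k\log\log r)}$; this is a miscalculation, since the exponent is $k\log\bigl(g(r,k)+1\bigr) = O(k\log r)$, not $O(k\log\log r)$. The error matters because the tools you propose for closing the estimate, namely $k \le O(r)$ and $g(r,k)\le r^2$, only yield $\bigl(g(r,k)+1\bigr)^k \le 2^{O(r\log r)}$: with $k$ as large as $\Theta(r)$, which your bound $k=O(r)$ permits, the exponent $k\log r$ is $\Theta(r\log r)$. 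A bound of $2^{O(r\log r)}$ is not just short of the stated $2^{5r}$, it is fatal for the lemma's purpose: Lemma~\ref{lem:hitins} union-bounds the number of instances against a per-instance failure probability of $2^{-\Theta(c r)}$ for a \emph{constant} $c$, and no constant choice of $c$ can absorb an extra $\log r$ factor in the exponent.

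The missing ingredient is a sharper use of distinctness of the IDs. Because the $k$ IDs are distinct positive integers, $r = \sum_{v\in K}\log v \ge \log(k!) = \Omega(k\log k)$, so $k\log k = O(r)$ --- a much stronger fact than $k = O(r)$. The paper's computation exploits exactly this: it writes $k\log g(r,k) \le 1.1\,k\log r = 1.1\bigl(k\log k + k\log\tfrac rk\bigr)$, bounds $k\log\tfrac rk \le r$ (since $\log x \le x$), and bounds the $k\log k$ term against $r$ via distinctness, arriving at $g(r,k)^k \le 2^{2.9r}$ with explicit constants, whence $2^{2r+1}\cdot 2^{2.9r}\le 2^{5r}$. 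Without invoking $r \ge \log(k!)$ (or some equivalent consequence of distinctness beyond $k=O(r)$), the wake-up count cannot be pushed below $2^{\Theta(r\log r)}$, so your argument as written does not establish the lemma. A secondary shortfall: the lemma asserts the explicit constant $5$, and your proposal only describes, but does not carry out, the bookkeeping needed to obtain it; once the gap above is repaired, that bookkeeping is precisely the displayed chain of inequalities in the paper's proof.
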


\begin{proof}
	The total number of bits used in all node IDs in the instance is at most $r$. There are at most $2^{r+1}$ possible bit-strings of length at most $r$, and at most $2^r$ ways of dividing each of these into substrings (for individual IDs), giving at most $2^{2r+1}$ sets of node IDs. The number of possible wake-up functions $\omega:K\rightarrow \nat$ is at most $g(r,k)^k$, since all nodes must be awake by $g(r,k)$ time or the instance would have been curtailed.
	\begin{align*}
	g(r,k)^k &= 2^{k\log g(r,k)} \leq 2^{1.1 k\log r}
	= 2^{1.1 (k\log k+ k \log \frac rk)}
	\leq 2^{1.3 (k\log (k^{0.9}) + r)}
	\leq 2^{2.9 r}
	\enspace.
	\end{align*}
	
	So, the total number of possible $(r,k)$-instances is at most $2^{2r+1+2.9r} \leq 2^{5 r}$.
\end{proof}

\begin{lemma}\label{lem:hitins}
	For any (sufficiently large) $r$, the probability that $S$ does not hit all $(r,k)$-instances is at most $2^{-3r}$
\end{lemma}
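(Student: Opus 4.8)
The plan is to combine the per-time-step hitting probability from Lemma~\ref{lem:goodcolhit} with the counting bound from Lemma~\ref{lem:risnt} via a union bound. First I would fix a single $(r,k)$-instance and bound the probability that \emph{none} of the ``good'' time-steps in $F$ produces a successful hit. The key structural observation is that the events ``time-step $j$ hits'' for distinct $j \in F$ are \emph{independent}: each depends on a disjoint set of the underlying Bernoulli variables $\{S(v)_{j-\omega(v)}\}_{v \in K_j}$, since the index $j-\omega(v)$ into node $v$'s sequence is different for each $j$. This independence is what lets me multiply the per-step failure probabilities rather than fussing with correlations.

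Given independence, for a fixed instance the probability of missing on every $j \in F$ is at most
\begin{align*}
\prod_{j \in F}\left(1 - \frac{\log\log k}{3\ci\log k}\right) &\leq \left(1 - \frac{\log\log k}{3\ci\log k}\right)^{|F|} \leq \exp\left(-\frac{\log\log k}{3\ci\log k}\cdot|F|\right)\enspace,
\end{align*}
where I use $1-x \leq e^{-x}$. Now I substitute the lower bound $|F| \geq \frac{\ci^2 r\log k}{2\log\log k}$ from the previous lemma; the $\log\log k$ and $\log k$ factors cancel cleanly, leaving an exponent proportional to $\ci r$, so the per-instance failure probability is at most $\exp\left(-\frac{\ci r}{6}\right) = 2^{-\Omega(\ci r)}$. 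I would then take a union bound over all at most $2^{5r}$ instances from Lemma~\ref{lem:risnt}, giving total failure probability at most $2^{5r}\cdot 2^{-\Omega(\ci r)}$. Choosing the constant $\ci$ large enough (and recalling $\ci$ was already required to be at least $7$, but it may need to be taken larger here) makes the $\ci r$ term dominate the $5r$ term, yielding the claimed $2^{-3r}$ bound.

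The main obstacle I anticipate is carefully verifying the independence claim and being precise about which Bernoulli variables each good time-step depends on — in particular, confirming that the curtailing convention and the shift to start at time-step $0$ do not cause two distinct good time-steps to read the same entry of some node's sequence. I would also need to double-check the bookkeeping on constants so that the exponent really beats $5r$ with room to spare for the target $2^{-3r}$; this is where the freedom to enlarge $\ci$ is essential, and I would state explicitly what lower bound on $\ci$ the argument forces. The rest is routine: once the single-instance bound and the counting bound are in hand, the union bound is immediate, and this lemma then feeds directly into the proof of Theorem~\ref{thm:UURS} via Borel--Cantelli or a simple convergent-series argument showing that, with positive probability, $S$ hits all instances for every sufficiently large $r$ simultaneously.
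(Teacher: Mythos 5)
Your proposal is correct and follows essentially the same route as the paper's proof: bound the per-instance failure probability by the product over the good time-steps in $F$ (via Lemma~\ref{lem:goodcolhit}, the bound $1-x\leq e^{-x}$, and the lower bound on $|F|$), then union-bound over the at most $2^{5r}$ instances from Lemma~\ref{lem:risnt}. Your explicit verification of independence across time-steps (each step reads disjoint entries $S(v)_{j-\omega(v)}$ of the underlying Bernoulli variables) is a detail the paper leaves implicit, and your caution about possibly needing to enlarge \ci is warranted, since the paper fixes $\ci=9$ but its constant bookkeeping at that point is loose.
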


\begin{proof}
	Fix some $(r,k)$-instance. The probability that it is not hit within $g(k,r)$ time-steps is at most
	\[\prod_{j\in F} (1- \frac{\log\log k}{3\ci\log k}) \leq e^{- |F|\frac{\log\log k}{3\ci\log k}} \leq e^{-\frac 23 \ci r} = 2^{-\frac {2 \ci r}{3\ln 2}}\enspace,\] by Lemma \ref{lem:goodcolhit}.
	Hence, if we set $c = 9$, by a union bound the probability that any $(r,k)$-instance is not hit is at most
	$2^{5 r} \cdot 2^{-\frac {18 r}{3\ln 2}}\leq 2^{-3r}\enspace.\qedhere$
\end{proof}

We can now prove our main theorem on unbounded universal synchronizers (Theorem \ref{thm:UURS}):

\begin{proof}
	By Lemma \ref{lem:hitins} and a union bound over $r$, the probability that $S$ does not hit all instances is at most $\sum_{r\in \nat} 2^{-3r} < 1$. Therefore $S$ is an unbounded universal synchronizer of delay $g$ with non-zero probability, so such an object must exist.
\end{proof}

\subsection{Unbounded transmission schedules}
\label{subsec-UTS}

For the task of broadcasting, i.e., when a global clock is available, we can make use of the global clock to improve our running time. We again define an infinite combinatorial object, which we will call an \textbf{unbounded transmission schedule}. We use the same definition of $(r,k)$-instances as in the previous section.

\begin{definition}
	\label{def:UTS}
	For a function $h:\nat \times \nat \rightarrow \nat$, an \textbf{unbounded transmission schedule of delay $h$} is a function $T : \nat \times \nat \rightarrow \{0,1\}^\nat$ such that $T(v,\omega(v))_j = 0$ for any $j < \omega(v)$, and for any $(r,k)$-instance there is some time-step $j \leq \omega(K) + h(r)$ with $\sum_{v\in K}  T(v,\omega(v))_{j} = 1$.
\end{definition}

The difference here from an unbounded universal synchronizer is that nodes now know the global time-step in which they wake up, and so their transmission patterns can depend upon it. This is the second argument of the function $T$. The other difference in the meaning of the definition is that the output of $T$ now corresponds to absolute time-step numbers, rather than being relative to each node's wake-up time as for unbounded universal synchronizers. That is, the $j^{th}$ entry of a node's output sequence tells it how it should behave in global time-step $j$, rather than $j$ time-steps after it wakes up.

Our existence result for unbounded transmission schedules is the following:

\begin{restatable}{theorem}{UTS}
	\label{thm:UTS}
	There exists an \textbf{unbounded transmission schedule of delay $h$}, where $h(r,k) =  O\left(r\log\log k\right)$.
\end{restatable}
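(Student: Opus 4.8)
The plan is to mirror the proof of Theorem~\ref{thm:UURS} almost line-for-line, exploiting the fact that the global clock removes the need to union-bound over all possible wake-up functions. I would again generate a candidate $T$ probabilistically, setting each $T(v,\omega(v))_j$ to $\textbf 1$ independently with probability roughly $\frac{\ci\log v}{(j-\omega(v))+2\ci\log v}$ (so that the transmission probability depends on the node's ID and on how long it has been awake), and $\textbf 0$ for $j<\omega(v)$ to respect the required silence condition. I would set the delay function $h(r,k)=\Theta(r\log\log k)$ and show that with positive probability the candidate hits every $(r,k)$-instance.

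The structural steps are the same load-based analysis. First I would re-establish a load lower bound analogous to Lemma~\ref{lem:loadlb}, using the curtailing trick (valid here too, since an instance not awake by time $h$ can be discarded) to guarantee $f(j)$ is at least a constant times $\frac{\log\log k}{\log k}$ on all relevant time-steps. The key quantitative difference is that with only $h(r,k)=O(r\log\log k)$ time-steps available --- far fewer than the $O\!\left(\frac{r\log k}{\log\log k}\right)$ of the synchronizer --- I would aim for a \emph{higher} per-step hitting probability, on the order of $\frac{1}{\log\log k}$ rather than $\frac{\log\log k}{\log k}$. I would therefore target a ``good'' window of time-steps whose load sits in a band like $[\,c_1,\,c_2\log\log k\,]$ (constant lower, $\Theta(\log\log k)$ upper), apply Lemma~\ref{lem:colhit} to get hitting probability $\gtrsim f4^{-f}\gtrsim (\log\log k)^{-O(1)}$, and show via a total-load bound (the same dyadic decomposition of $K$ into classes $K_i$ by $\log v$) that a constant fraction of the $h(r,k)$ time-steps lie in this band.

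The crucial payoff is in counting instances. Because nodes now know their global wake-up times, the synchronizer's $\omega$ is effectively part of the schedule's \emph{input index} (the second argument of $T$) rather than a free adversarial choice to be union-bounded over. Concretely, I would redo Lemma~\ref{lem:risnt} without the factor $g(r,k)^k$ coming from wake-up functions: the number of instances to account for collapses to roughly $2^{O(r)}$ from the ID-set count alone, since the schedule is pre-committed to a response for each (ID, wake-up-time) pair. Then the per-instance failure probability must beat $2^{-O(r)}$: over $|F|=\Theta(r\log\log k)$ good time-steps each hitting with probability $\gtrsim (\log\log k)^{-O(1)}$, the failure probability is at most $\exp(-\Theta(|F|(\log\log k)^{-O(1)}))=\exp(-\Theta(r(\log\log k)^{1-O(1)}))$, which for an appropriate constant in the load band is $2^{-\Omega(r)}$ and dominates the $2^{O(r)}$ union bound. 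A final union bound over $r\in\nat$ of a summable $2^{-\Omega(r)}$ tail gives total failure probability below $1$, so the desired $T$ exists.

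The main obstacle I expect is balancing the two competing budgets. The removal of the wake-up-function term shrinks the instance count, but the shorter delay $h=O(r\log\log k)$ forces each good time-step to carry a much larger hitting probability, which in turn constrains the upper end of the load band. I would need to verify that a load band with constant lower bound and $\Theta(\log\log k)$ upper bound simultaneously (i) contains a constant fraction of the $h(r,k)$ time-steps under the total-load bound, and (ii) yields, via $f4^{-f}$, a hitting probability large enough that $|F|$ times it exceeds $\Theta(r)$ with the right constant. Tuning the constant $\ci$ and the precise band endpoints so both inequalities hold at once --- rather than the cleaner regime of Theorem~\ref{thm:UURS} where load was near-constant --- is where the delicate bookkeeping lies.
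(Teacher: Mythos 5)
Your proposal has two fatal gaps, and together they mean the approach fails rather than merely needing tuning. First, your candidate never actually uses the global clock: you generate $T(v,\omega(v))_j$ with probability roughly $\frac{\ci\log v}{(j-\omega(v))+2\ci\log v}$, which depends only on the ID and the time since wake-up, so your object is a universal synchronizer in disguise. The paper's construction differs precisely here: it takes a synchronizer-like probability (with an extra $\log\log$ factor, $\frac{\cj\log v\log\log(j-\omega(v))}{(j-\omega(v))+2\cj\log v\log\log(j-\omega(v))}$) and multiplies it by $2^{-(j\bmod z(j-\omega(v)))}$, where $z(x)\approx 2\log\log x$ --- an exponential decay within short phases whose position is determined by the \emph{global} time $j$. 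This clock-driven sweep guarantees that in most phases the load passes through the constant band $[\frac{1}{2\cj},1]$ no matter how the adversary staggers wake-ups, yielding $\Theta(r)$ steps with constant hitting probability inside a window of only $O(r\log\log k)$ steps. Without it, your load analysis collapses: curtailing permits the adversary to wake nodes (with $\log v\approx\log L$) at spacing $\Theta(\log L\log\log k)$, and then the load climbs like $\frac{\ln m}{\ci\log\log k}$ after $m$ wake-ups, reaching $\Theta(\frac{\log k}{\log\log k})$ and staying above every constant for essentially the entire window. So if your ``good band'' has a constant upper end, only $o(r)$ steps of the window lie in it; and if its upper end is $\Theta(\log\log k)$, you hit an outright miscalculation: at $f=c_2\log\log k$, Lemma \ref{lem:colhit} gives hitting probability $f4^{-f}=c_2\log\log k\cdot(\log k)^{-2c_2}$, which is polynomially small in $\log k$, not $(\log\log k)^{-O(1)}$ as you claim. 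Either way the sum of hitting probabilities over the window is $o(r)$, which cannot beat a $2^{\Theta(r)}$ instance count.

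Second, the payoff you build the plan around --- dropping the union bound over wake-up functions because ``$\omega$ is part of the schedule's input index'' --- is wrong. The defining property quantifies over all $(r,k)$-instances, hence over all wake-up functions; pre-committing a response sequence for each (ID, wake-up time) pair does not merge the failure events of distinct wake-up functions, and each such event must still be covered. The paper's proof accordingly keeps the count of (curtailed) wake-up functions, $h(r,k)^k\le 2^{2.9r}$, and in fact must multiply by a further factor $z(h(r,k))$ because behavior now depends on absolute time; its two-step generation (the sequences $s_{v,j}$ together with the uniform phase values $p_{v,j}$) is engineered exactly so that shifting an instance by a multiple of $z(h(r,k))$ leaves behavior unchanged, which is what keeps this count finite at all. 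So the speedup from $O(\frac{r\log k}{\log\log k})$ to $O(r\log\log k)$ does not come from counting fewer instances --- that count is essentially the same $2^{5r}$ in both proofs --- but from the clock-driven probability decay, which is the idea missing from your proposal.
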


Our method will again be to randomly generate a candidate unbounded transmission schedule $T$, and then prove that it satisfies the required property with positive probability, so some such object must exist.

Let $\cj$ be a constant to be chosen later. Our candidate object $T$ will be generated as follows: for each node $v$, we generate a transmission sequence $s_{v,j}$, $j \in \nat$, with $s_{v,j}$ independently chosen to be $1$ with probability $\frac{\cj\log v \log\log j}{j+2\cj\log v\log\log j}$ and $0$ otherwise.

However, these will not be our final probabilities for generating $T$. To make use of our global clock, we will also divide time into short \emph{phases} during which transmission probability will decay exponentially. The lengths of these phases will be based on a function $z(j) := 2^{\lceil 1+ \log\log\log j\rceil}$, i.e., $\log\log j$ rounded up to the next-plus-one power of $2$. An important property to note is that for all $i$, $z(i)|z(i+1)$. We also generate a sequence $p_{v,j}$, $j \in \nat$ of \emph{phase values}, each chosen independently and uniformly at random from the real interval $[0,1]$. These, combined with the global time-step number and current phase length, will give us our final generation probabilities.

We set $T(v,\omega(v))_j$ to equal $1$ iff $s_{v,j-\omega(v)}=1$ and $p_{v,j-\omega(v)}\leq 2^{-j\bmod z(j-\omega(v))}$.

It can then be seen that \[\Prob{T(v,\omega(v))_j = 1} = \frac{\cj\log v \log\log (j-\omega(v))}{(j-\omega(v)+2\cj\log v\log\log (j-\omega(v))) 2^{j\bmod z(j-\omega(v))}}\enspace .\]

The reason we split the process of random generation into two steps (using our phase values) is that now, if we shift all wake-up times in an $(r,k)$-instance by the same multiple of $z(x)$, then node behavior in the first $x$ time-steps after $\omega(K)$ does not change. We will require this property when analyzing $T$.

Our probabilistic analysis is with respect to the $\sigma$-algebra generated by all events $E_{v,\omega(v),j}= \{T:T(v,\omega(v))_j = 1\}$, with $v,\omega(v),j\in \nat$, and with the corresponding probabilities given above.

Let \textbf{load $f(j)$ of a time-step $j$} be again defined as the expected number of transmissions in that time-step:

\[
f(j) :=
\sum_{v \in K_j}\frac{\cj\log v \log\log (j-\omega(v))}{(j-\omega(v)+2\cj\log v\log\log (j-\omega(v))) 2^{j\bmod z(j-\omega(v))}}
\enspace.
\]

We will set our delay function $h(r,k) = \cj^2 r\log\log k$.

Again we make some observations that allow us to narrow the circumstances we must consider: firstly that we can again assume that $r$ is larger than a sufficiently large constant, and secondly that we need only look at curtailed instances (i.e., we can assume $j-\omega(K)\leq h(r_j,k_j) \forall j< h(r,k)$). This time, however, we cannot shift instances to begin at time-step $0$, because node behavior is dependent upon global time-step number.

We follow a similar line of proof as before, except with some extra complications in dealing with phases. We first consider only time-steps at the beginning of each phase, i.e., time-steps $\omega(K)<j\leq \omega(K)+h(r,k)$ with $j\bmod z(h(r,k))\equiv 0$, and we will call these \emph{basic} time-steps. Notice that for basic time-steps,
\[f(j) = \sum_{v \in K_j}\frac{\cj\log v \log\log (j-\omega(v))}{j-\omega(v)2\cj\log v\log\log(j-\omega(v))}\enspace.\]

We bound the load of basic time-steps from below:

\begin{lemma}\label{lem:loadlb}
	All basic time-steps $j$ have $f(j) \geq \frac {1}{2\cj}$.
\end{lemma}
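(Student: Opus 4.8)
The plan is to mimic the lower-bound argument from the earlier Lemma \ref{lem:loadlb} for unbounded universal synchronizers, now adapted to basic time-steps of the transmission-schedule setting. The key simplification is that for a basic time-step $j$ we have $j\bmod z(j-\omega(v)) \equiv 0$, so the phase-decay factor $2^{-j\bmod z(j-\omega(v))}$ equals $1$ and the load reduces to the simpler expression already displayed just before the lemma. Thus I need only bound from below the quantity
\[
f(j) = \sum_{v \in K_j}\frac{\cj\log v \log\log (j-\omega(v))}{j-\omega(v)+2\cj\log v\log\log (j-\omega(v))}\enspace.
\]

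First I would restrict attention to the set $K_j$ of nodes awake by time $j$, writing $k_j := |K_j|$ and $r_j := \sum_{v\in K_j}\log v$, exactly as in the synchronizer proof. The idea is that each summand is increasing in $\log v$ for fixed elapsed time, so aggregating the numerators via $r_j$ and bounding each $j-\omega(v)$ by the curtailing bound $h(r_j,k_j) = \cj^2 r_j\log\log k_j$ should give a clean telescoping estimate. Concretely, I would use curtailing to guarantee $j-\omega(v)\le h(r_j,k_j)$ for every awake node, then lower-bound $f(j)$ by pulling the $\log\log(j-\omega(v))$ factors and the denominators into a common form. The target $\frac{1}{2\cj}$ suggests that after substituting $h(r_j,k_j)=\cj^2 r_j\log\log k_j$ the $\log\log k_j$ factors cancel against the $\log\log(j-\omega(v))$ in the numerator, leaving a ratio of the shape $\frac{\cj r_j \log\log k_j}{2\cj^2 r_j\log\log k_j}=\frac{1}{2\cj}$, paralleling the final chain of inequalities in the earlier lemma.

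The main obstacle I anticipate is that $\log\log(j-\omega(v))$ depends on each node's individual elapsed time rather than on a single global quantity, so unlike the clean $\frac{\ci r}{j+2\ci r}$ step in the synchronizer case I cannot immediately factor it out of the sum. The resolution will be to show that at a basic time-step the relevant elapsed times are large enough that $\log\log(j-\omega(v))$ is comparable to $\log\log k_j$ (using curtailing to ensure $j-\omega(v)$ is neither too small nor too large), and to handle the two regimes $k_j=k$ and $k_j<k$ separately just as before. A secondary subtlety is verifying that the $\frac12$ slack in $2\cj\log v\log\log(j-\omega(v))$ in the denominator lets me bound the denominator by $2h(r_j,k_j)$ after applying curtailing, which is what forces the clean cancellation; I would check that the constant $\cj$ can be taken large enough that all these comparisons hold for sufficiently large $r$, consistent with the standing assumption that $r$ exceeds a large constant.
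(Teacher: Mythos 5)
Your proposal is correct and takes essentially the same route as the paper's proof: restrict to the awake set $K_j$ with parameters $r_j,k_j$, split into the cases $k_j=k$ and $k_j<k$, bound each elapsed time $j-\omega(v)$ above by $h(r,k)$ (window length) or $h(r_j,k_j)$ (curtailing), and let the $\log\log$ factors cancel against $h=\cj^2 r\log\log k$ to leave exactly $\frac{1}{2\cj}$. The obstacle you flag about per-node factors $\log\log(j-\omega(v))$ is resolved just as you suggest (each summand is decreasing in the elapsed time, and $\log\log h(r_j,k_j)\geq\log\log k_j$), which is precisely the substitution the paper performs.
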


\begin{proof}
	Fix a basic time-step $j$, let $K_j$ be the set of nodes awake by time-step $j$, and let $k_j = |K_j|$ and $r_j=\sum_{v\in K_j} \log v$, analogous to $r$ and $k$. If $k_j = k$, then
	\begin{align*}
	f(j) &\geq \sum\limits_{v\in K}\frac{\cj\log v \log\log (j-\omega(v))}{j-\omega(v)+2\cj\log v\log\log(j-\omega(v))}
	\geq \sum\limits_{v\in K}\frac{\cj\log v \log\log h(r,k)}{h(r,k)+2\cj\log v\log\log h(r,k)}\\
	&\geq \sum\limits_{v\in K}\frac{\cj\log v \log\log k}{2\cj^2 r\log\log k}
	\geq \frac{\cj r}{2\cj^2 r} = \frac{1}{2\cj}\enspace.
	\end{align*}
	
	If $k_j < k$, then due to our curtailing of instances, we have $j-\omega(K) \leq h(r_j,k_j)$. So,
	\begin{align*}
	f(j) &\geq \sum\limits_{v\in K_j}\frac{\cj\log v \log\log (j-\omega(v))}{j-\omega(v)+2\cj\log v\log\log(j-\omega(v))}
	\geq \sum\limits_{v\in K}\frac{\cj\log v \log\log h(r_j,k_j)}{h(r_j,k_j)+2\cj\log v\log\log h(r,k)}\\
	&\geq \sum\limits_{v\in K}\frac{\cj\log v \log\log k_j}{2\cj^2 r_j\log\log k_j}
	\geq \frac{\cj r_j}{2\cj^2 r_j}
	= \frac{1}{2\cj}
	\enspace.
	\qedhere
	\end{align*}
\end{proof}

We next examine time-steps at the end of phases, i.e., with $\omega(K)<j\leq \omega(K)+h(r,k)$ and $j\bmod z(h(r,k))\equiv -1$, and call these \emph{ending} time-steps. Note that for ending time-steps, \[f(j) = \sum_{v \in K_j}\frac{\cj\log v \log\log (j-\omega(v))}{(j-\omega(v)+2\cj\log v\log\log(j-\omega(v))) 2^{z(j-\omega(v))-1}}\enspace.\]

We bound the load of (a constant fraction of) ending time-steps from above:
\begin{lemma}\label{lem:a}
	Let $\mathcal F$ denote the set of ending time-steps $j$ such that $ f(j) \leq 1$. Then $|\mathcal F|\geq \frac{\cj^2r}{2}$.
\end{lemma}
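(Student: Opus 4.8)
The plan is to mirror the upper-bound half of the synchronizer's good-time-step argument, adapted to the damped probabilities: bound the \emph{total} load summed over all ending time-steps, deduce via a counting (Markov-type) argument that only few ending time-steps can carry load exceeding $1$, and subtract this from the total number of ending time-steps. Note that, unlike the analogous synchronizer lemma, here I only need an upper bound on the load — the lower bound is supplied separately by the basic-time-step lemma — so the whole argument is a one-sided estimate.

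First I would bound $\sum_{j\text{ ending}} f(j)$. Exchanging the order of summation, I would write this as a sum over nodes $v\in K$ of the contribution of $v$ at its ending time-steps, and substitute $i:=j-\omega(v)$. Dropping the additive term in the denominator via $\frac{\cj\log v\log\log i}{i+2\cj\log v\log\log i}\le\frac{\cj\log v\log\log i}{i}$, each term is at most $\frac{\cj\log v\log\log i}{i\cdot 2^{z(i)-1}}$. The crucial feature is the damping factor $2^{-(z(i)-1)}$ present at ending time-steps: since $z(i)\ge 2\log\log i$, we have $2^{z(i)-1}\ge\tfrac12(\log i)^2$, so the term is at most $\frac{2\cj\log v\log\log i}{i(\log i)^2}$. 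Relaxing the sum over $v$'s ending time-steps (which correspond to $i$ in an arithmetic progression) to a sum over all $i\ge 2$ — legitimate since all terms are nonnegative — the inner sum is $2\cj\log v\sum_{i\ge2}\frac{\log\log i}{i(\log i)^2}$, and $\sum_{i}\frac{\log\log i}{i(\log i)^2}$ converges by the integral test (substitute $u=\log i$). Hence each node contributes $O(\cj\log v)$, and summing over $K$ with $\sum_{v\in K}\log v=r$ gives total ending load $O(\cj r)$.

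From this I would conclude by counting that at most $O(\cj r)$ ending time-steps have $f(j)>1$. It then remains to lower-bound the total number of ending time-steps: these are the $j\equiv-1\pmod{z(h(r,k))}$ in a window of length $h(r,k)=\cj^2 r\log\log k$, so there are roughly $h(r,k)/z(h(r,k))$ of them. Here one must control $z(h(r,k))$: since $h(r,k)=\cj^2 r\log\log k$, one checks that $\log\log\log h(r,k)$ differs from $\log\log\log k$ by a lower-order amount, so $z(h(r,k))=\Theta(\log\log k)$ and there are $\Theta(\cj^2 r)$ ending time-steps in total. Subtracting the $O(\cj r)$ with load above $1$ and taking $\cj$ a sufficiently large constant then leaves at least $\frac{\cj^2 r}{2}$ ending time-steps in $\mathcal F$.

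I expect the main obstacle to be the total-load computation, and specifically making the damping work cleanly. Two points require care. First, one must establish $z(i)\ge 2\log\log i$ from the definition of $z$, so that the damping beats the $1/i$ tail and yields a convergent series rather than the $\log k$-sized bound obtained in the synchronizer case. Second, one must correctly separate the two distinct roles of $z$: the \emph{global} phase length $z(h(r,k))$ deciding which $j$ are ending, versus the \emph{per-node} value $z(i)$ governing the damping of $v$. The identity $j\bmod z(j-\omega(v))=z(j-\omega(v))-1$ at ending time-steps — which relies on $z(j-\omega(v))\mid z(h(r,k))$, valid because $j-\omega(v)\le h(r,k)$ and $z$ satisfies $z(i)\mid z(i')$ for $i\le i'$ — is what links the two scales and justifies the displayed formula for $f(j)$ at ending time-steps; pinning down $z(h(r,k))=\Theta(\log\log k)$ for the final count is the other place where constants must be tracked to reach the stated bound.
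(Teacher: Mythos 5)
Your proposal is correct and follows essentially the same route as the paper's proof: bound the total load over all ending time-steps by $O(\cj r)$, using $z(i)\ge 2\log\log i$ so that the damping factor $2^{-(z(i)-1)}$ makes each node's contribution a convergent series summing to $O(\cj\log v)$, then conclude by a Markov-type count that at most $O(\cj r)$ ending time-steps have load above $1$ and subtract this from the roughly $h(r,k)/z(h(r,k))$ ending time-steps. The differences are cosmetic — you exchange the order of summation before bounding, whereas the paper first replaces each wake-up time by the worst case $t-\omega(v)=0$ — and your final counting step (treating $z(h(r,k))$ as $\Theta(\log\log k)$, hence $\Theta(\cj^2 r)$ ending time-steps) is precisely the same implicit estimate the paper relies on.
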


\begin{proof}
	Let $t$ be the first ending time-step.
	The total load over all ending time-steps can be bounded as follows:
	
	\[
	\sum_{\text{ending timestep }j} f(j) \leq \sum_{i = 0}^{h(r,k)/z(h(r,k))} f(t+iz(h(r,k)))\\
	\leq \sum_{i = 0}^{\cj^2 r} f(t+iz(h(r,k)))\enspace.
	\]
	
	Applying the definition of $f$, $f(t+iz(h(r,k)))$ is equal to:
	\[ \sum_{v \in K_{t+iz(h(r))}}\frac{\cj\log v \log\log (t+iz(h(r,k))-\omega(v))}{(t+iz(h(r,k))-\omega(v)+2\cj\log v\log\log(t+iz(h(r,k))-\omega(v))) 2^{z(t+iz(h(r,k))-\omega(v))-1}}
	\enspace,\]
	
	which is bounded from above when $t-\omega(v) = 0$:
	
	\begin{align*}
	f(t+iz(h(r,k))) &\leq \sum_{v \in K_{t+iz(h(r))}}\frac{\cj\log v \log\log (iz(h(r)))}{(iz(h(r,k)) + 2\cj \log v\log\log(iz(h(r,k))))2^{z(iz(h(r,k)))}}\\
	&\leq \sum_{v \in K_{t+iz(h(r,k))}}\frac{\cj\log v \log\log (iz(h(r,k)))}{iz(h(r,k)) 2^{z(iz(h(r,k)))}}
	\enspace.
	\end{align*}
	
	So,
	\begin{align*}
	\sum_{\text{ending timestep }j} f(j) &\leq \sum_{i = 0}^{\cj^2 r} \sum_{v \in K_{t+iz(h(r,k))}}\frac{\cj\log v \log\log (iz(h(r,k)))}{iz(h(r,k))2^{z(iz(h(r,k)))}}\\
	&\leq \sum_{v \in K}\sum_{i = 0}^{\cj^2 r} \frac{\cj\log v \log\log (iz(h(r,k)))}{iz(h(r,k))2^{z(iz(h(r,k)))}}
	\leq \sum_{v \in K}\sum_{i = 0}^{\cj^2 r} \frac{2\cj \log v \log\log (iz(h(r,k)))}{iz(h(r,k))\log^2 (iz(h(r,k)))}\\
	&\leq \sum_{v \in K}2 \cj \log v \sum_{i = 0}^{\infty} \frac{\log\log (iz(h(r,k))) }{iz(h(r,k))\log^2 (iz(h(r,k)))}
	\leq 10\cj r\enspace.
	\end{align*}
	Here the last inequality follows since the second sum converges to a constant less than $5$, which can be seen by inspection of the first three terms and using the integral bound $\int_{2}^{\infty} \frac{\log\log x}{x\log^2 x}<2$
	
	Therefore, at most $10\cj r$ ending time-steps have load higher than $1$, and so at least $\cj^2 r - 10\cj r \geq \frac{\cj^2r}{2} $ (provided we set $\cj\geq 5$) ending time-steps have $f(j) \leq 1$.
\end{proof}

We can use Lemma \ref{lem:a} to show that we have sufficiently many time-steps with load within the interval $[\frac 1\cj,1]$:

\begin{lemma}
	Let $\mathcal F$ be the set of time-steps $\omega(K)<j\leq \omega(K)+h(r,k)$ with $\frac {1}{2\cj} \leq f(j) \leq  1$. Then $|\mathcal F|\geq \frac{\cj^2r}{2}$.
\end{lemma}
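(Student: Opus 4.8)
The plan is to combine the two preceding lemmas through a crossing argument. By Lemma \ref{lem:a}, at least $\frac{\cj^2 r}{2}$ ending time-steps satisfy $f(j)\leq 1$, and by Lemma \ref{lem:loadlb} every basic time-step satisfies $f(j)\geq\frac{1}{2\cj}$. Each global phase runs from a basic time-step through $z(h(r,k))$ consecutive time-steps to the following ending time-step, so I would associate to each of the $\frac{\cj^2 r}{2}$ \emph{good} ending time-steps (those with $f\leq1$) the phase it terminates. Within such a phase the load starts at a value $\geq\frac{1}{2\cj}$ and finishes at a value $\leq 1$, and the goal is to exhibit one time-step inside the phase whose load lies in the target window $[\frac{1}{2\cj},1]$. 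Since distinct good ending time-steps lie in disjoint phases, producing one such time-step per phase gives an injection into $\mathcal F$, and hence $|\mathcal F|\geq\frac{\cj^2 r}{2}$.

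The crossing argument within a fixed phase runs as follows. Let $j_0$ be the basic time-step beginning the phase. If $f(j_0)\leq 1$, then since $f(j_0)\geq\frac{1}{2\cj}$ we already have $j_0\in\mathcal F$. Otherwise $f(j_0)>1\geq f(j_{\text{end}})$, so there is a first time-step $j'$ in the phase with $f(j')\leq1$ while $f(j'-1)>1$. The key quantitative fact I would establish is that the load can decrease by at most a factor of $2\cj$ in a single step, i.e. $f(j+1)\geq\frac{1}{2\cj}f(j)$ for consecutive time-steps inside a phase; granting this, $f(j')\geq\frac{1}{2\cj}f(j'-1)>\frac{1}{2\cj}$, so $f(j')\in(\frac{1}{2\cj},1]$ and $j'\in\mathcal F$.

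To prove $f(j+1)\geq\frac{1}{2\cj}f(j)$ I would argue node by node, writing $f=\sum_v c_v$. Passing from $j$ to $j+1$, the exponential factor $2^{-(j\bmod z(j-\omega(v)))}$ of node $v$ either halves, when the exponent increments, or increases, when the exponent wraps from $z(j-\omega(v))-1$ back to $0$ at the end of one of $v$'s sub-phases. In the wrap case the contribution only grows, so $c_v(j+1)\geq c_v(j)$. In the halving case the exponential contributes exactly $\frac12$, and it remains to control the rational prefactor $\frac{\cj\log v\log\log(j-\omega(v))}{j-\omega(v)+2\cj\log v\log\log(j-\omega(v))}$: its numerator is non-decreasing and its denominator grows by at most a bounded factor per step, so the prefactor changes by at most a constant factor, giving $c_v(j+1)\geq\frac{1}{2\cj}c_v(j)$. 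Summing over $v$ yields the claim. The divisibility property $z(i)\mid z(i+1)$ is what guarantees that basic and ending time-steps synchronize all nodes at exponent $0$ and $z(\cdot)-1$ respectively, so that these are indeed the only two cases.

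The main obstacle is making this per-step bound rigorous, in particular bounding the change in the rational prefactor when $j-\omega(v)$ is small (so the denominator can nearly double in one step) and handling the rare steps at which $z(j-\omega(v))$ itself increments within a phase; I expect both to be absorbable into the constant $\cj$, which is exactly why the window is $[\frac{1}{2\cj},1]$ rather than $[\frac12,1]$. A minor secondary point is the boundary phase at the start of the window, whose basic time-step may fall at or before $\omega(K)$: this can cost at most one phase and is negligible for the asymptotic bound.
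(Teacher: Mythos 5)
Your proposal is correct and is essentially the paper's own proof: the paper argues that load decreases by at most a multiplicative factor of $3$ between consecutive time-steps (because each node's contribution does), so each of the $\geq \frac{\cj^2 r}{2}$ good ending time-steps from Lemma \ref{lem:a} has, in its preceding phase (which starts at a basic time-step of load $\geq \frac{1}{2\cj}$ by Lemma \ref{lem:loadlb}), some step with load in $[\frac{1}{2\cj},1]$. If anything you are more careful than the paper, since the per-step decrease bound and the subtleties you flag (the prefactor's change for small $j-\omega(v)$ and the rare steps where $z(j-\omega(v))$ increments) are exactly what the paper compresses into ``it can be seen that.''
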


\begin{proof}
	It can be seen that load decreases by at most a multiplicative factor of $3$ between consecutive time-steps (since the contribution of each node decreases by at most a factor of $3$). So, since every base time-step has load at least $\frac {1}{2\cj}$, for every ending timestep $j$ with $f(j) \leq  1$, there is some $j'$ in the preceding phase with $\frac {1}{2\cj} \leq f(j') \leq 1$.
\end{proof}

Since these time-steps have constant load, they have constant probability of hitting:

\begin{lemma}\label{lem:goodcolhit2}
	For any time-step $j \in \mathcal F$, the probability that $j$ hits is at least $\frac {1}{3\cj}$.
\end{lemma}

\begin{proof}
	By Lemma \ref{lem:colhit}, the probability that $j$ hits is at least $f(j)4^{-f(j)}$. This is minimized over the range $[\frac {1}{2\cj},1]$ at $f(j) = \frac {1}{2\cj}$ and is therefore at least $\frac{4^{-\frac{1}{2\cj}}}{2\cj} \geq \frac {1}{3\cj}$.
\end{proof}

We now need to know how many unique $(r,k)$-instances we must hit. Since we are only concerned with the first $h(r,k)$ time-steps after the first node wakes up, we need only consider $(r,k)$-instances which are unique within this time range.

\begin{lemma}
	For any (sufficiently large) $r$, the number of unique (up to the first $h(r,k)$ steps after activation) $(r,k)$-instances is at most $2^{5r}$.
\end{lemma}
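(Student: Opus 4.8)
The plan is to mirror the counting argument of Lemma \ref{lem:risnt}, making the one modification forced by the fact that $T$ depends on the global time-step rather than only on relative wake-up times. As there, I would factor the count into a choice of ID set and a choice of wake-up function. The ID-set count is identical: the total number of ID bits in an $(r,k)$-instance is at most $r$, so there are at most $2^{r+1}$ bit-strings of length at most $r$ and at most $2^r$ ways to partition each into individual IDs, giving at most $2^{2r+1}$ distinct sets of node IDs.

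The wake-up function is where this lemma departs from Lemma \ref{lem:risnt}. In the wake-up setting we could translate every instance to start at time $0$, so only the $k$ relative offsets mattered and the count was $g(r,k)^k$; here node behavior depends on the absolute clock, so that shortcut is unavailable. To recover it I would invoke the phase-shift property recorded after Definition \ref{def:UTS}: shifting all wake-up times by a common multiple of $z(x)$ leaves node behavior in the first $x$ time-steps after $\omega(K)$ unchanged. Taking $x = h(r,k)$, this shows that the behavior over the relevant window is determined by the relative offsets $\{\omega(v)-\omega(K)\}_{v\in K}$ together with the single residue $\omega(K) \bmod z(h(r,k))$. Hence the number of wake-up functions that are distinguishable up to the first $h(r,k)$ steps is at most $h(r,k)^k \cdot z(h(r,k))$, where the first factor counts relative offsets (each at most $h(r,k)$ after curtailing) and the second counts the global residue class.

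It then remains to bound this product, which is the only computational step. For the $h(r,k)^k$ term I would reuse the chain of inequalities from Lemma \ref{lem:risnt} verbatim: since $h(r,k) = \cj^2 r\log\log k$ satisfies $\log h(r,k) \leq 1.1\log r$ for sufficiently large $r$ (the $\log\log\log k$ correction is dominated by $\log r$, exactly as $\log\log k$ was in the synchronizer case), the same estimates yield $h(r,k)^k \leq 2^{2.9r}$. The extra factor $z(h(r,k)) = 2^{\lceil 1+\log\log\log h(r,k)\rceil}$ is minuscule — it is polylogarithmic in $r$, so $\log z(h(r,k)) = O(\log\log\log r)$ and certainly $z(h(r,k)) \leq 2^{0.1r}$ — so multiplying gives a total of at most $2^{2r+1}\cdot 2^{2.9r}\cdot 2^{0.1r} \leq 2^{5r}$.

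The genuinely new ingredient, and the step I would be most careful about, is the treatment of the global clock: everything hinges on arguing that the phase-shift property exactly quotients the set of global offsets down to the $z(h(r,k))$ residues modulo $z(h(r,k))$, so that no behavior over the first $h(r,k)$ steps is left uncounted. Once that reduction is justified, the additional factor is so small that it is absorbed into the slack already present in the $2^{5r}$ bound, and the rest of the calculation is inherited unchanged from Lemma \ref{lem:risnt}.
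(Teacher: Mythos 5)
Your proposal is correct and follows essentially the same argument as the paper: the same $2^{2r+1}$ count of ID sets, the same $h(r,k)^k \leq 2^{2.9r}$ bound on relative wake-up offsets, and the same use of the phase-shift property to reduce the global offset to a single residue modulo $z(h(r,k))$, absorbed into the slack of $2^{5r}$. The only difference is presentational — you make the $z(h(r,k)) \leq 2^{0.1r}$ estimate explicit where the paper leaves it implicit.
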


\begin{proof}
	As before (in Lemma \ref{lem:risnt}) there are at most $2^{2r+1}$ sets of node IDs. The number of possible wake-up functions $\omega:Q\rightarrow \nat$ for a fixed $\omega(K)$ is again at most $h(r,k)^q$, and though we are using a different delay function to the previous section, the calculations used to prove Lemma \ref{lem:risnt} still hold.
	
	\begin{align*}
	h(r)^k &= 2^{k\log h(r,k)} \leq 2^{1.1 k\log r}
	=
	2^{1.1 (k\log k + k \log \frac rk)}
	\leq
	2^{1.3 (k\log (k^{0.9}) + r)}
	\leq
	2^{2.9 r}
	\enspace.
	\end{align*}
	
	However, now our object does depend on $\omega(K)$, though as we noted we can shift all activation times by a multiple of $z(h(r,k))$ and behavior during the time-steps we analyze is unchanged. So our total number of instances to consider is multiplied by $z(h(r,k))$, and is upper bounded by
	$2^{2r+1+2.9r} z(h(r,k)) \leq 2^{5 r}
	\enspace.\qedhere$

\end{proof}

\begin{lemma}\label{lem:hitins2}
	For any (sufficiently large) $r$, the probability that $S$ does not hit all $(r,k)$-instances is at most $2^{-3r}$.
\end{lemma}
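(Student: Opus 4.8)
The plan is to mirror almost exactly the structure of Lemma \ref{lem:hitins} from the unbounded universal synchronizer section, since the statement of Lemma \ref{lem:hitins2} is identical in form: for sufficiently large $r$, the failure probability to hit all $(r,k)$-instances is at most $2^{-3r}$. The two ingredients are now in hand: we have just established (in the previous lemma) that the number of unique $(r,k)$-instances to consider, up to the first $h(r,k)$ steps after activation, is at most $2^{5r}$; and we have Lemma \ref{lem:goodcolhit2}, which says that each time-step $j \in \mathcal F$ hits with probability at least $\frac{1}{3\cj}$, together with the bound $|\mathcal F| \geq \frac{\cj^2 r}{2}$. So the structure is: (1) fix a single $(r,k)$-instance, (2) bound the probability that \emph{this} instance is never hit across all time-steps in $\mathcal F$, (3) take a union bound over the $2^{5r}$ instances.

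First I would fix an arbitrary $(r,k)$-instance and bound the probability it is not hit at any time-step in $\mathcal F$. The key subtlety, which is why this cannot be a one-line independence argument, is that the events ``$j$ hits'' for different $j \in \mathcal F$ are \emph{not} independent, since the same underlying random variables $s_{v,i}$ and phase values $p_{v,i}$ may influence multiple time-steps. This is exactly the same issue present in Lemma \ref{lem:hitins}, where the product $\prod_{j\in F}(1 - \frac{\log\log k}{3\ci\log k})$ is used; I would follow the same route and assume the corresponding independence across the relevant time-steps (or argue that the phase structure, together with the choice of basic/ending time-steps spaced by $z(h(r,k))$, makes the hitting events on distinct phases depend on disjoint blocks of the randomness). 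Granting this, the failure probability for the fixed instance is at most
\[
\prod_{j\in \mathcal F}\left(1 - \frac{1}{3\cj}\right) \leq e^{-|\mathcal F|/(3\cj)} \leq e^{-\frac{\cj^2 r}{2}\cdot\frac{1}{3\cj}} = e^{-\frac{\cj r}{6}} = 2^{-\frac{\cj r}{6\ln 2}}\enspace,
\]
using $1-x \leq e^{-x}$ and the lower bound $|\mathcal F| \geq \frac{\cj^2 r}{2}$.

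Next I would apply a union bound over all $2^{5r}$ instances, giving a total failure probability of at most $2^{5r}\cdot 2^{-\frac{\cj r}{6\ln 2}}$. To make this at most $2^{-3r}$ I need $\frac{\cj}{6\ln 2} \geq 8$, i.e. it suffices to choose the constant $\cj$ large enough (concretely $\cj \geq 48\ln 2$, so any $\cj \geq 34$ works, consistent with the earlier requirement $\cj \geq 5$). This fixes the value of $\cj$ and completes the bound.

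The main obstacle I anticipate is the independence/correlation issue flagged above: unlike in the synchronizer case, here the randomness is generated in two layers (the $s_{v,j}$ sequence and the phase values $p_{v,j}$), and the time-steps in $\mathcal F$ arise from a mix of basic and ending time-steps within phases. I would need to verify that the events used in the product bound really are independent (or at least negatively correlated, which would preserve the product upper bound). The earlier remark that shifting wake-up times by a multiple of $z(h(r,k))$ leaves early behavior unchanged, and the fact that $z(i)\mid z(i+1)$, suggest the phase structure is designed precisely so that the relevant hitting events on the chosen well-separated time-steps decouple; I would lean on that design, exactly as the synchronizer proof implicitly does, to justify treating the per-time-step failure probabilities multiplicatively. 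Everything else is the routine arithmetic of matching the constant $\cj$ to drive $2^{5r}$ below $2^{-3r}$.
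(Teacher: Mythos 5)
Your proposal is correct and follows essentially the same route as the paper's proof: fix one $(r,k)$-instance, apply the product bound $\prod_{j\in\mathcal F}\left(1-\frac{1}{3\cj}\right) \leq 2^{-\cj r/(6\ln 2)}$ using Lemma \ref{lem:goodcolhit2} and $|\mathcal F|\geq \cj^2 r/2$, then union bound over the $2^{5r}$ instances and choose $\cj = 34$, exactly as the paper does. One clarification on the independence issue you flag: it holds for a simpler reason than the phase structure --- for each fixed node $v$, distinct time-steps $j\neq j'$ consult distinct entries $s_{v,j-\omega(v)},p_{v,j-\omega(v)}$ versus $s_{v,j'-\omega(v)},p_{v,j'-\omega(v)}$, all of which are mutually independent by construction, so the hitting events at distinct time-steps depend on disjoint sets of independent random variables and the product bound is legitimate.
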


\begin{proof}
	Fix some $(r,k)$-instance. The probability that it is not hit within $h(r,k)$ time-steps is at most
	\[\prod_{j\leq |\mathcal F|} (1-\frac {1}{3\cj}) \leq e^{-\frac {|\mathcal F|}{3\cj}} \leq e^{-\frac {\cj r}{6}} = 2^{-\frac {\cj r}{6\ln 2}} \enspace.\]
	
	Hence, if we set $\cj = 34$, by a union bound the probability that any $(r,k)$-instance is not hit is at most
	$2^{5 r} \cdot 2^{-\frac {34 r}{6\ln 2}} \leq 2^{-3r}\enspace.\qedhere$
\end{proof}

We can now prove our main theorem on unbounded transmission schedules (Theorem \ref{thm:UTS}):

\begin{proof}
	By Lemma \ref{lem:hitins2} and a union bound over $r$, the probability that $S$ does not hit all instances is at most $\sum_{r\in \nat} 2^{-3r} < 1$. Therefore $T$ is an unbounded transmission schedule of delay $h$ with non-zero probability, so such an object must exist.
\end{proof}

\section{Algorithms for multiple access channels}
\label{sec:algs-multiple-access-channels}

Armed with our combinatorial objects, our algorithms are now extremely simple, and are the same for multiple access channels as for multi-hop radio networks.

Let $S$ be an unbounded universal synchronizer of delay $g$, where $g(r,k) =  O\left(\frac{r \log k}{\log\log k}\right)$, and $T$ be an unbounded transmission schedule of delay $h$, where $h(r,k) = O(r \log\log k)$.

Our algorithms are simply applications of these objects.

\begin{algorithm}[H]
	\caption{Wake-up at a node $v$}
	\label{alg:WU}
	\begin{algorithmic}
		\For {$j$ from $1$ to $\infty$, in time-step $\omega(v)+j$,}
		\State $v$ transmits iff $S(v)_{j} = 1$
		\EndFor	
	\end{algorithmic}
\end{algorithm}	

\begin{theorem}\label{thm:MACWU}
	Algorithm \ref{alg:WU} performs wake-up in multiple access channels in time $O\left(\frac{k \log L \log k}{\log\log k}\right)$, without knowledge of $k$ or $L$.
\end{theorem}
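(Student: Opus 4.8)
The plan is to derive Theorem~\ref{thm:MACWU} as a direct corollary of the existence result in Theorem~\ref{thm:UURS}, by instantiating the abstract delay bound $g(r,k) = O\!\left(\frac{r\log k}{\log\log k}\right)$ with the specific parameters of the multiple access channel setting. The key observation is that Algorithm~\ref{alg:WU} is precisely the application of an unbounded universal synchronizer $S$ described informally in Section~\ref{subsec-UUS}: each node $v$ transmits in its local time-step $j$ exactly when $S(v)_j = 1$. Since $S$ is fixed and depends only on a node's own ID, the algorithm requires no knowledge of $k$ or $L$, which disposes of the ``blind'' part of the claim immediately.

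The main content is the running-time bound. First I would argue correctness: when the wake-up task is run, the set $K$ of participating nodes together with the induced wake-up function $\omega$ constitutes an $(r,k)$-instance for $r = \sum_{v\in K}\log v$. By the defining property of $S$ (Definition~\ref{def:UURS}), there is a time-step $j \le \omega(K) + g(r,k)$ at which exactly one node transmits, i.e. $\sum_{v\in K}\mathcal S(v)_{j-\omega(v)} = 1$. In a multiple access channel (a single-hop network / clique), a lone transmission is heard by every other node, so at that step \emph{all} remaining dormant nodes are woken simultaneously and the task completes. Hence the elapsed time from the first wake-up is at most $g(r,k)$.

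It then remains to translate $g(r,k)$ into the stated $O\!\left(\frac{k\log L\log k}{\log\log k}\right)$ bound. The crucial step is bounding the instance parameter $r = \sum_{v\in K}\log v$: since there are $k$ participating nodes and each ID is at most $L$, we have $r \le k\log L$. Substituting into $g(r,k) = O\!\left(\frac{r\log k}{\log\log k}\right)$ yields $g(r,k) = O\!\left(\frac{k\log L\,\log k}{\log\log k}\right)$, exactly as claimed. I would note that $\log k$ and $\log\log k$ in the delay are already expressed in terms of the actual number of nodes $k$, which matches the statement without further manipulation.

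I do not anticipate a genuine obstacle here; the theorem is essentially a packaging of Theorem~\ref{thm:UURS} for the concrete model, and the only real work is the elementary estimate $r \le k\log L$ combined with the single-hop observation that one successful transmission reaches all nodes at once. The one point requiring a little care is confirming that a single successful transmission indeed wakes \emph{every} remaining node in a multiple access channel (as opposed to making incremental progress, which would be the concern in the multi-hop setting treated later); this follows directly from the clique structure of the channel, where every node hears any solitary transmission.
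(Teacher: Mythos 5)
Your proposal is correct and follows essentially the same route as the paper's own proof: invoke the defining property of the unbounded universal synchronizer to get a solitary transmission within $g(r,k)$ steps of the first activation, observe that in a single-hop channel this wakes everyone, and substitute $r \le k\log L$ to obtain $O\!\left(\frac{k\log L\log k}{\log\log k}\right)$. The paper's version is just a terser packaging of the same argument; your explicit statements of the bound $r \le k\log L$ and of the clique observation are details the paper leaves implicit.
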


\begin{proof}
	By the definition of an unbounded universal synchronizer, there is some time-step within
	\[g(r,k) = O\left(\frac{r \log k}{\log\log k}\right) = O\left(\frac{k \log L \log k }{\log\log k}\right)\]
	time-steps of the first activation in which only one node transmits, and this completes wake-up.
\end{proof}

\begin{algorithm}[H]
	\caption{Broadcasting at a node $v$}
	\label{alg:BC}
	\begin{algorithmic}
		\For {$j$ from $1$ to $\infty$, in time-step $j$,}
		\State $v$ transmits iff $T(v,\omega(v))_{j} = 1$
		\EndFor	
	\end{algorithmic}
\end{algorithm}	

\begin{theorem}\label{thm:MACBC}
	Algorithm \ref{alg:BC} performs broadcasting in multiple access channels in time \sloppy{$O(k \log L \log\log k)$}, without knowledge of $k$ or $L$.
\end{theorem}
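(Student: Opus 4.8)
The plan is to mirror the proof of Theorem \ref{thm:MACWU}, substituting the unbounded transmission schedule $T$ guaranteed by Theorem \ref{thm:UTS} for the synchronizer. The essential observation I would establish first is that in a multiple access channel a single successful transmission suffices to complete the (wake-up-style) broadcasting task: when exactly one node transmits, every other node on the shared channel — whether already awake or still dormant — hears the message and is thereby woken. Hence the task is finished at the first time-step that ``hits'' in the sense of Definition \ref{def:UTS}.

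Next I would argue that the execution of Algorithm \ref{alg:BC} on the $k$ participating nodes induces a well-defined $(r,k)$-instance. Each node $v$ transmits in global time-step $j$ exactly when $T(v,\omega(v))_{j} = 1$, where $\omega(v)$ is the global step at which $v$ wakes (spontaneously or on receipt of a message); since the schedule satisfies $T(v,\omega(v))_{j} = 0$ for $j < \omega(v)$, no node transmits before waking, as required by the model. The resulting wake-up function $\omega$, together with the node set $K$ and $r := \sum_{v\in K}\log v$, is precisely an $(r,k)$-instance, and as noted in the construction there is no circular dependency, since the algorithm's behavior up to step $j$ determines only the wake-ups up to step $j$.

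I would then invoke Theorem \ref{thm:UTS} directly: the schedule $T$ hits this instance at some time-step $j \le \omega(K) + h(r,k)$ with $h(r,k) = O(r\log\log k)$. By the first observation this hit completes broadcasting, so the time elapsed from the first wake-up $\omega(K)$ is at most $h(r,k)$. Finally, substituting the bound $r = \sum_{v\in K}\log v \le k\log L$ (each of the $k$ IDs is at most $L$, so each $\log v \le \log L$) yields a running time of $O(r\log\log k) = O(k\log L\log\log k)$, the claimed bound. This uses no knowledge of $k$ or $L$, since every node simply follows the fixed schedule $T$ using only its own ID and its own wake-up step.

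The argument is almost entirely a direct application of Theorem \ref{thm:UTS}, so I do not anticipate a serious technical obstacle. The only point demanding care is the first step: verifying that a single collision-free transmission genuinely wakes all remaining nodes in the multiple access channel (so that one hit ends the task), and confirming that the instance induced by the actual run legitimately falls under the universally-quantified guarantee of Theorem \ref{thm:UTS}, rather than requiring some adaptive or adversarial strengthening.
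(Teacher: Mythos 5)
Your proposal is correct and follows essentially the same route as the paper: the paper's own proof is a one-line application of Theorem \ref{thm:UTS}, noting a single successful transmission completes the task and that $h(r,k) = O(r\log\log k) = O(k\log L\log\log k)$. The additional justifications you supply (that the run induces a legitimate $(r,k)$-instance with no circular dependency, and that $r \le k\log L$) are exactly the details the paper leaves implicit, having addressed the non-circularity point in its discussion of the wake-up function.
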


\begin{proof}
	By the definition of an unbounded transmission schedule, there is some time-step within
	$h(r,k) = O(r \log\log k) = O(k \log L \log\log k)$
	time-steps of the first activation in which only one node transmits, and this completes broadcasting.
\end{proof}

\section{Algorithms for radio networks}
\label{sec:algs-radio-networks}

To see how our results on multiple access channels (Theorems \ref{thm:MACWU} and \ref{thm:MACBC}) transfer directly to multi-hop radio networks, we apply the analysis method of \cite{-CD16} for radio network protocols. The idea is that we fix a shortest path $p=(p_0,p_1,\dots p_d)$ from some \emph{source} node to some \emph{target} node, and then classify all nodes into \emph{layers} depending on the furthest node along the path to which they are an in-neighbor, i.e., layer $L_i = \{v:\max j \text{ such that } (v,p_j)\in E = i \}$. We wish to quantify how long a layer can remain \emph{leading}, that is, the furthest layer to contain awake nodes. The key point is that we can regard these layers as multiple access channels: though they are not necessarily cliques, all we need is for a single node in the layer to transmit and then the layer ceases to be leading. Once the final layer ceases to be leading, the target node must be informed, and since this node was chosen arbitrarily we obtain a time-bound for informing the entire network. Thereby the problem is reduced to a sequence of at most $D$ single-hop instances, whose sizes sum to at most $n$. For full details of this analysis method see \cite{-CD16}.

Therefore we can use the following lemma from \cite{-CD16} (paraphrased to fit our notation) to analyze how our algorithms perform in radio networks.

\begin{lemma}{\textbf{(Lemma 10 of \cite{-CD16})}}\label{lem:algtime}
	Let $X:[n]\rightarrow \nat$ be a non-decreasing function, and define $Y(n)$ to be the supremum of the function $\sum_{i=1}^{n} X(\ell_i)$, where non-negative integers $\ell_i$ satisfy the constraint $\sum_{i=1}^{n} \ell_i \le n$. If a broadcast or wake-up protocol ensures that any layer of size $\ell$ remains leading for no more than $X(\ell)$ time-steps, then all nodes wake up within $Y(n)$ time-steps.
\end{lemma}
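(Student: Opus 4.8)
The plan is to follow the reduction sketched above and to track the index of the \emph{leading} layer as a function of time, showing that this index is monotone and hence partitions the total elapsed time into at most one episode per layer, whose length is controlled directly by the hypothesis.

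First I would fix an arbitrary target node together with its associated shortest path $p=(p_0,\dots,p_d)$ and layer decomposition $\{L_i\}$, and record two structural facts. Since each node is an in-neighbor of a well-defined furthest path node (or of none), the layers are pairwise disjoint subsets of $V$; hence their sizes $\ell_i:=|L_i|$ are non-negative integers with $\sum_i \ell_i \le n$, and because $p$ is a simple (shortest) path there are at most $n$ of them. Second, I would argue that the leading-layer index is non-decreasing in time: when the leading layer $L_i$ produces a successful solo transmission, the node $p_i$ is woken, and since $(p_i,p_{i+1})\in E$ we have $p_i\in L_{i'}$ for some $i'\ge i+1$; as awake nodes never return to the dormant state, the furthest awake layer can only advance. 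Consequently each layer is the leading layer during at most one contiguous interval of time, and the argument of the cited reduction guarantees that while $L_i$ leads the awake in-neighbors of $p_i$ are precisely the awake nodes of $L_i$, so the per-layer bound of the hypothesis applies.

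Next I would assemble the global bound. Let $t_0$ be the time of the first wake-up; from $t_0$ until the target is informed there is at each step a well-defined leading layer, and by monotonicity the steps during which a fixed $L_i$ leads form a single interval, of length at most $X(\ell_i)$ by hypothesis. These intervals are consecutive, non-overlapping, and indexed by \emph{distinct} layers, so the total elapsed time is at most $\sum_{i\in\mathcal L} X(\ell_i)$, where $\mathcal L$ is the set of layers that ever lead. Since $X$ is non-negative this is at most $\sum_i X(\ell_i)$ over all layers, and because the $\ell_i$ are non-negative integers, at most $n$ in number and summing to at most $n$, padding with zero-size layers exhibits $(\ell_i)$ as a feasible configuration in the definition of $Y(n)$; hence the total is at most $Y(n)$. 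As the target was arbitrary, every node is woken within $Y(n)$ steps.

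The main obstacle I anticipate is not any calculation but the careful justification of the monotonicity step and the claim that each layer leads during a single episode — that the leading token moves strictly forward and never revisits a layer. This is exactly what forces the episode lengths to be summed over \emph{distinct} layers with disjoint node sets, which is what lets the bound collapse to $Y(n)$ rather than to a larger quantity that double-counts layers. The final passage to $Y(n)$ is then immediate from its definition as a supremum, and notably does not itself require the monotonicity of $X$, which is needed only when $Y$ is later evaluated in concrete applications.
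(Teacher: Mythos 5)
Note that the paper contains no proof of this lemma at all: it is imported verbatim from \cite{-CD16} (``Lemma 10''), and the paper only restates the surrounding layer/leading-layer reduction in prose before invoking it. Your proposal is a correct fleshing-out of exactly that reduction --- disjoint layers along a fixed shortest path from the first-woken source to an arbitrary target, monotonicity of the leading index because the awake set only grows, one contiguous leading episode per layer of length at most $X(\ell_i)$, and padding with zero-size layers to compare the resulting sum against the supremum defining $Y(n)$ --- so it takes essentially the same approach as the paper's (sketched) argument, and I see no gap in it.
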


\begin{theorem}
	\label{thm:RNWU}
	Algorithm \ref{alg:WU} performs wake-up in radio networks in time $O(\frac{n \log L \log n}{\log\log n})$, without knowledge of $n$ or $L$.
\end{theorem}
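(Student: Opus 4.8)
The plan is to reduce radio-network wake-up to a collection of multiple-access-channel instances via the layer-decomposition framework of \cite{-CD16} recalled above, bound the lifetime of a single layer using the universality of the synchronizer from Theorem \ref{thm:UURS}, and then aggregate over layers with Lemma \ref{lem:algtime}.

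First I would fix an arbitrary source--target pair together with a shortest path between them, and partition the nodes into layers $L_i$ according to the furthest path node to which each is an in-neighbor, exactly as described. The key point supplied by the framework is that a leading layer ceases to be leading as soon as one of its nodes transmits without collision, since such a transmission reaches the next path node and advances the wake-up front. Hence it suffices to bound, for a layer of size $\ell$, the number $X(\ell)$ of time-steps for which it can remain leading.

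Next I would apply Theorem \ref{thm:UURS} to each layer regarded as a multiple-access channel. A layer of size $\ell$ with IDs up to $L$ constitutes an $(r,\ell)$-instance with $r = \sum_{v}\log v \le \ell \log L$, whose wake-up function is the one induced by the moment the layer first becomes leading. Because $S$ is \emph{universal}, hitting every $(r,k)$-instance regardless of its wake-up function, a singleton transmission is guaranteed within $g(r,\ell) \le g(\ell\log L,\ell) = O\!\left(\frac{\ell \log L \log \ell}{\log\log \ell}\right)$ steps of the layer's first activation. We may therefore take $X(\ell) = O\!\left(\frac{\ell \log L \log \ell}{\log\log \ell}\right)$.

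Finally I would feed this $X$ into Lemma \ref{lem:algtime} and evaluate $Y(n) = \sup \sum_i X(\ell_i)$ subject to $\sum_i \ell_i \le n$. The crucial structural fact is that $X(\ell)/\ell = O\!\left(\frac{\log L \log \ell}{\log\log \ell}\right)$ is non-decreasing in $\ell$, so that since each $\ell_i \le n$ we get $X(\ell_i) = \ell_i\cdot \frac{X(\ell_i)}{\ell_i} \le \ell_i\cdot\frac{X(n)}{n}$; summing over $i$ and using $\sum_i \ell_i \le n$ yields $Y(n) \le X(n) = O\!\left(\frac{n \log L \log n}{\log\log n}\right)$, as required. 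I expect the only genuinely delicate step to be confirming this consolidation of $X$ (equivalently, that the worst case concentrates all nodes into a single layer), which rests on the monotonicity of $\frac{\log\ell}{\log\log\ell}$ over the relevant range of $\ell$; the reduction itself and the per-layer bound follow almost immediately from the universality of $S$ and the quoted framework.
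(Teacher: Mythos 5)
Your proposal is correct and follows essentially the same route as the paper's proof: the layer decomposition of \cite{-CD16}, the per-layer bound $X(\ell) = O\left(\frac{\ell \log L \log \ell}{\log\log \ell}\right)$ obtained from the universal synchronizer (Theorem \ref{thm:MACWU}), and aggregation via Lemma \ref{lem:algtime}. The only difference is cosmetic: you explicitly justify, via the monotonicity of $X(\ell)/\ell$, that the supremum $Y(n)$ is attained by concentrating all $n$ nodes in a single layer, a step the paper simply asserts.
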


\begin{proof}
	By Theorem \ref{thm:MACWU}, any layer of size $\ell$ remains leading for no more than $X(\ell)$ time-steps, where $X(\ell) = O(\frac{\ell \log L \log \ell}{\log\log \ell})$. $Y(n,h)$ is then maximized by setting $\ell_1 = n$ and $\ell_i = 0$ for every $i>1$. So, by Lemma \ref{lem:algtime}, wake-up is performed for the entire radio network in $O(\frac{n \log L \log n}{\log\log n})$ time.
\end{proof}

\begin{theorem}
	\label{thm:RNB}
	Algorithm \ref{alg:BC} performs broadcasting in radio networks in time $O(n \log L \log\log n)$, without knowledge of $n$ or $L$.
\end{theorem}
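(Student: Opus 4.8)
The plan is to mirror exactly the proof of Theorem \ref{thm:RNWU}, substituting the broadcasting bound for the wake-up bound throughout. The layered analysis method of \cite{-CD16} described at the start of this section reduces broadcasting in the whole network to a collection of single-hop (multiple access channel) instances associated with the layers $L_i$, where the sizes of the instances sum to at most $n$ and each layer behaves as a multiple access channel for the purpose of ceasing to be leading (we only need one node in a leading layer to transmit successfully). Thus the first step is to invoke Theorem \ref{thm:MACBC} to conclude that any layer of size $\ell$ remains leading for no more than $X(\ell)$ time-steps, where $X(\ell) = O(\ell \log L \log\log \ell)$.

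Next I would apply Lemma \ref{lem:algtime} with this function $X$. The lemma tells us all nodes are informed within $Y(n)$ time-steps, where $Y(n)$ is the supremum of $\sum_{i=1}^{n} X(\ell_i)$ over non-negative integers $\ell_i$ with $\sum_{i=1}^{n}\ell_i \le n$. The key observation is that this supremum is attained by concentrating all the mass into a single layer, i.e.\ by taking $\ell_1 = n$ and $\ell_i = 0$ for $i>1$. This is because $X(\ell)/\ell = O(\log L \log\log \ell)$ is non-decreasing in $\ell$, which makes $X$ superadditive ($X(a+b) \ge X(a)+X(b)$); spreading the budget across several layers can therefore only decrease the total. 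Consequently $Y(n) = X(n) = O(n \log L \log\log n)$, and by Lemma \ref{lem:algtime} broadcasting completes for the entire radio network in $O(n \log L \log\log n)$ time, with no knowledge of $n$ or $L$.

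The only point requiring any care is the justification that the extremal configuration is a single full layer, exactly as in the proof of Theorem \ref{thm:RNWU}; here it is even cleaner, since $\log L \log\log \ell$ is manifestly monotone in $\ell$. I do not anticipate a genuine obstacle, as the heavy lifting has already been done in establishing Theorem \ref{thm:MACBC} and in the reduction of \cite{-CD16} encapsulated by Lemma \ref{lem:algtime}; the remainder is the routine monotonicity argument and substitution of the broadcasting delay bound.
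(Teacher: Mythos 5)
Your proposal is correct and follows essentially the same route as the paper's own proof: invoke Theorem \ref{thm:MACBC} to bound the leading time of a layer of size $\ell$ by $X(\ell) = O(\ell \log L \log\log \ell)$, then apply Lemma \ref{lem:algtime} with the extremal configuration $\ell_1 = n$, $\ell_i = 0$ for $i > 1$. Your added justification via superadditivity of $X$ is a welcome touch the paper omits, but it does not constitute a different approach.
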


\begin{proof}
	By Theorem \ref{thm:MACBC}, any layer of size $\ell$ remains leading for no more than $X(\ell)$ time-steps, where $X(\ell) = O(\ell \log L \log\log \ell)$. $Y(n,h)$ is then maximized by setting $\ell_1 = n$ and $\ell_i = 0$ for $i>1$. So, by Lemma \ref{lem:algtime}, broadcasting is performed for the entire radio network in $O(n \log L \log\log n)$ time.
\end{proof}

\section{Conclusions}

We have shown that \emph{deterministic algorithms} for communications primitives in multiple access channels and multi-hop radio networks \emph{need not assume parameter knowledge, or small IDs}, to be efficient. One possible next step would be to show a means by which nodes could generate efficient transmission schedules based on some finite (i.e. with size bounded by some function of ID) advice string; this  would go some way towards making the algorithm practical. 

\newcommand{\Proc}{Proceedings of the\xspace}
\newcommand{\STOC}{Annual ACM Symposium on Theory of Computing (STOC)}
\newcommand{\FOCS}{IEEE Symposium on Foundations of Computer Science (FOCS)}
\newcommand{\SODA}{Annual ACM-SIAM Symposium on Discrete Algorithms (SODA)}
\newcommand{\COCOON}{Annual International Computing Combinatorics Conference (COCOON)}
\newcommand{\DISC}{International Symposium on Distributed Computing (DISC)}
\newcommand{\ESA}{Annual European Symposium on Algorithms (ESA)}
\newcommand{\ICALP}{Annual International Colloquium on Automata, Languages and Programming (ICALP)}
\newcommand{\IPL}{Information Processing Letters}
\newcommand{\JACM}{Journal of the ACM}
\newcommand{\JALGORITHMS}{Journal of Algorithms}
\newcommand{\JCSS}{Journal of Computer and System Sciences}
\newcommand{\PODC}{Annual ACM Symposium on Principles of Distributed Computing (PODC)}
\newcommand{\SICOMP}{SIAM Journal on Computing}
\newcommand{\SPAA}{Annual ACM Symposium on Parallelism in Algorithms and Architectures (SPAA)}
\newcommand{\STACS}{Annual Symposium on Theoretical Aspects of Computer Science (STACS)}
\newcommand{\TALG}{ACM Transactions on Algorithms}
\newcommand{\TCS}{Theoretical Computer Science}



\bibliography{nk}

\begin{thebibliography}{10}

\bibitem{-ABLP91}
N.~Alon, A.~Bar-Noy, N.~Linial, and D.~Peleg.
\newblock A lower bound for radio broadcast.
\newblock {\em \JCSS}, 43(2):290--298, 1991.

\bibitem{-BGI92}
R.~Bar-Yehuda, O.~Goldreich, and A.~Itai.
\newblock On the time-complexity of broadcast in multi-hop radio networks: {A}n
  exponential gap between determinism and randomization.
\newblock {\em \JCSS}, 45(1):104--126, 1992.

\bibitem{-CGKR05}
B.~Chlebus, L.~Gasieniec, D.~R. Kowalski, and T.~Radzik.
\newblock On the wake-up problem in radio networks.
\newblock In {\em \Proc 32nd \ICALP}, pages 347--359, 2005.

\bibitem{-CK04}
B.~Chlebus and D.~R. Kowalski.
\newblock A better wake-up in radio networks.
\newblock In {\em \Proc 23rd \PODC}, pages 266--274, 2004.

\bibitem{-CK05}
B.~Chlebus and D.~R. Kowalski.
\newblock Almost optimal explicit selectors.
\newblock In {\em \Proc 15th International Symposium on Fundamentals of
  Computation Theory (FCT)}, pages 270--280, 2005.

\bibitem{-CGGPR00}
B.~S. Chlebus, L.~Gasieniec, A.~Gibbons, A.~Pelc, and W.~Rytter.
\newblock Deterministic broadcasting in unknown radio networks.
\newblock {\em Distributed Computing}, 15(1):27--38, 2002.

\bibitem{-CGOR00}
B.~S. Chlebus, L.~Gasieniec, A.~\"{O}stlin, and J.~M. Robson.
\newblock Deterministic radio broadcasting.
\newblock In {\em \Proc 27th \ICALP}, pages 717--728, 2000.

\bibitem{-CGK07}
M.~Chrobak, L.~Gasieniec, and D.~R. Kowalski.
\newblock The wake-up problem in multihop radio networks.
\newblock {\em \SICOMP}, 36(5):1453--1471, 2007.

\bibitem{-CGR00}
M.~Chrobak, L.~Gasieniec, and W.~Rytter.
\newblock Fast broadcasting and gossiping in radio networks.
\newblock {\em \JALGORITHMS}, 43(2):177--189, 2002.

\bibitem{-CMS03}
A.~E.~F. Clementi, A.~Monti, and R.~Silvestri.
\newblock Distributed broadcasting in radio networks of unknown topology.
\newblock {\em \TCS}, 302(1-3):337--364, 2003.

\bibitem{-CD17}
A.~Czumaj and P.~Davies.
\newblock Exploiting spontaneous transmissions for broadcasting and leader
  election in radio networks.
\newblock In {\em \Proc 36th \PODC}, pages 3--12, 2017.

\bibitem{-CD16}
A.~Czumaj and P.~Davies.
\newblock Deterministic communication in radio networks.
\newblock {\em \SICOMP}, 47(1):218--240, 2018.

\bibitem{-CD18b}
A.~Czumaj and P.~Davies.
\newblock Randomized communication without network knowledge, 2018.

\bibitem{-CR06}
A.~Czumaj and W.~Rytter.
\newblock Broadcasting algorithms in radio networks with unknown topology.
\newblock In {\em \Proc 44th \FOCS}, pages 492--501, 2003.

\bibitem{-GPP01}
L.~Gasieniec, A.~Pelc, and D.~Peleg.
\newblock The wakeup problem in synchronous broadcast systems.
\newblock {\em SIAM Journal on Discrete Mathematics}, 14(2):207--222, 2001.

\bibitem{-GHK13}
M.~Ghaffari, B.~Haeupler, and M.~Khabbazian.
\newblock Randomized broadcast in radio networks with collision detection.
\newblock In {\em \Proc 32nd \PODC}, pages 325--334, 2013.

\bibitem{-HW16}
B.~Haeupler and D.~Wajc.
\newblock A faster distributed radio broadcast primitive.
\newblock In {\em \Proc 35th \PODC}, pages 361--370, 2016.

\bibitem{-I02}
P.~Indyk.
\newblock Explicit constructions of selectors and related combinatorial
  structures, with applications.
\newblock In {\em \Proc 13th \SODA}, pages 697--704, 2002.

\bibitem{-JS05}
T.~Jurdzi\'{n}ski and G.~Stachowiak.
\newblock Probabilistic algorithms for the wakeup problem in single-hop radio
  networks.
\newblock {\em Theory of Computing Systems}, 38(3):347--367, 2005.

\bibitem{-K05}
D.~Kowalski.
\newblock On selection problem in radio networks.
\newblock In {\em \Proc 24th \PODC}, pages 158--166, 2005.

\bibitem{-KP03}
D.~Kowalski and A.~Pelc.
\newblock Faster deterministic broadcasting in ad hoc radio networks.
\newblock {\em SIAM Journal on Discrete Mathematics}, 18:332--346, 2004.

\bibitem{-KP03b}
D.~Kowalski and A.~Pelc.
\newblock Broadcasting in undirected ad hoc radio networks.
\newblock {\em Distributed Computing}, 18(1):43--57, 2005.

\bibitem{-KM98}
E.~Kushilevitz and Y.~Mansour.
\newblock An {$\Omega(D \log(N/D))$} lower bound for broadcast in radio
  networks.
\newblock {\em \SICOMP}, 27(3):702--712, 1998.

\bibitem{-DM10}
G.~De Marco.
\newblock Distributed broadcast in unknown radio networks.
\newblock {\em \SICOMP}, 39(6):2162--2175, 2010.

\bibitem{-DMP01}
G.~De Marco and A.~Pelc.
\newblock Faster broadcasting in unknown radio networks.
\newblock {\em \IPL}, 79:53--56, 2001.

\bibitem{-DMPS07}
G.~De Marco, M.~Pelegrini, and G.~Sburlati.
\newblock Faster deterministic wakeup in multiple access channels.
\newblock {\em Discrete Apllied Mathematics}, 155(8):898--903, 2007.

\bibitem{-Pel07}
D.~Peleg.
\newblock Time-efficient broadcasting in radio networks: {A} review.
\newblock In {\em \Proc 4th International Conference on Distributed Computing
  and Internet Technology (ICDCIT)}, pages 1--18, 2007.

\bibitem{-W86}
D.~E. Willard.
\newblock Log-logarithmic selection resolution protocols in a multiple access
  channel.
\newblock {\em \SICOMP}, 15(2):468--477, 1986.

\end{thebibliography}

\end{document}